%
%
%

\documentclass[graybox]{svmult}


\usepackage{type1cm}        
%
\usepackage{makeidx}         
\usepackage{graphicx}        
\usepackage{multicol}        
\usepackage[bottom]{footmisc}

\usepackage{newtxtext}       %
\usepackage{newtxmath}       
\usepackage{enumitem}

\usepackage{bbm}

\def\bR{\mathbb{R}}
\def\bC{\mathbb{C}}
\def\bN{\mathbb{N}}
\def\bZ{\mathbb{Z}}

\def\cA{\mathcal{A}}
\def\cF{\mathcal{F}}
\def\cS{\mathcal{S}}

\def\cB{\mathcal{B}}

\def\cM{\mathcal{M}}

\def\rd{\bR^d}

\def\rdd{\bR^{2d}}

\def\la{\langle}
\def\ra{\rangle}
\def\lc{\left(}
\def\rc{\right)}

\def\supp{\mathrm{supp}}

\def\*b{*_{\bullet}}
\def\w{\mathrm{w}}

\def\S0{S^0_{0,0}}

\def\Bd'{B_{\delta'}}

\def\cBd'{\bar{B}_{\delta'}}

\def\hbi{\frac{i}{\hbar}}
\def\Spdr{\mathrm{Sp}(d,\bR)}

\def\te{\widetilde{E}}
\def\rte{\bR \setminus \te}


\makeindex             


\begin{document}

\title*{On exceptional times for pointwise convergence of integral kernels in Feynman-Trotter path integrals}
\titlerunning{Exceptional times for pointwise convergence of integral kernels in path integrals}
\author{Hans G. Feichtinger, Fabio Nicola, and S. Ivan Trapasso}
\institute{Hans G. Feichtinger \at Faculty of Mathematics, University of Vienna, Oskar-Morgenstern-Platz 1, A-1090 Wien. \newline \email{hans.feichtinger@univie.ac.at} 
\and Fabio Nicola \at Dipartimento di Scienze Matematiche ``G. L. Lagrange'', Politecnico di Torino,
\newline Corso Duca degli Abruzzi 24, 10129, Torino. \email{fabio.nicola@polito.it} \and S. Ivan Trapasso \at Dipartimento di Scienze Matematiche ``G. L. Lagrange'', Politecnico di Torino, \newline  Corso Duca degli Abruzzi 24, 10129, Torino. \email {salvatore.trapasso@polito.it}}
%
%
\maketitle

\abstract{In the first part of the paper we provide a survey of recent results concerning the problem of pointwise convergence of integral kernels in Feynman path integral, obtained by means of time-frequency analysis techniques. We then focus on exceptional times, where the previous results do not hold, and we show that weaker forms of convergence still occur. In conclusion we offer some clues about possible physical interpretation of exceptional times.}

\section{Introduction}
Integration over infinite-dimensional spaces of paths plays a relevant role in modern quantum physics. This machinery first appeared in a 1948 paper \cite{feyn1 48} by Richard Feynman, shortly followed by \cite{feyn 49} where path integrals paved the way to the celebrated Feynman diagrams, hence to a completely new way to investigate field theories.

Let us briefly recall the most important features of the functional integral formulation of (non-relativistic) quantum mechanics. The interested reader may consult the textbook \cite{feyn2 hibbs} for a comprehensive introduction to the subject from a physical perspective. Recall that the state of a particle in $\rd$ at time $t\in \bR$ is represented by the wave function $\psi(t,x)$, $(t,x) \in \bR \times \rd$, such that $\psi (t,\cdot)\in L^{2}(\rd)$. The time evolution of the initial state $\varphi(x)$ at $t=0$ is regulated by the corresponding Cauchy problem for the Schr\"odinger equation:
\begin{equation}
\begin{cases}
i \hbar \partial_{t}\psi=(H_{0}+V(x))\psi\\
\psi(0,x)=\varphi(x),
\end{cases}\label{cauchy schr}
\end{equation}
where $0<\hbar\le 1$ is a parameter (representing the Planck constant), $H_{0}=-\hbar^2\triangle/2$ is the free particle Hamiltonian and $V$ is a real-valued potential; we set $m=1$ for the mass of the particle.
The map $U(t,s): \psi(s,\cdot) \mapsto \psi(t,\cdot)$, $t,s\in\bR$, is a unitary operator on $L^2(\rd)$ and is called \textit{propagator} or \textit{evolution operator}; we set $U(t)$ for $U(t,0)$. Since $U(t)$ is a linear operator we may formally represent it as an integral operator with distribution kernel $u_t$, namely
\[
\psi(t,x)=\int_{\rd}u_t(x,y)\varphi(y)dy.
\]
The kernel $u_t$ (actually known as propagator in physics) is interpreted as the transition amplitude from the position $y$ at time $0$ to the position $x$ at time $t$. In his papers Feynman essentially provided a recipe for how to compute this kernel, involving all the possible \emph{interfering alternative paths} from $y$ to $x$ that could be followed by the particle. In particular, each path would contribute to the total probability amplitude with a phase factor proportional to the \emph{action functional} corresponding
to the path:
\[
S\left[\gamma\right] = S(t,0,x,y)=\int_{s}^{t}L(\gamma(\tau),\dot{\gamma}(\tau))d\tau,
\]
where $L$ is the Lagrangian of the corresponding classical system. In a nutshell, a formal representation of the kernel is
\begin{equation}
u_t(x,y)=\int e^{\hbi S\left[\gamma\right]}\mathcal{D}\gamma,\label{ker path int}
\end{equation}
underpinning some integration procedure over the infinite-dimensional space of paths satisfying the conditions above. Notice that (a still formal) application of the stationary phase principle shows that the semiclassical limit $\hbar \to 0$ selects the classical trajectory, in according with the principle of stationary action of classical mechanics.

\subsection{The mathematics of path integrals}
In spite of the popularity and the successful predictions of path integrals, it is not clear what the meaning of \eqref{ker path int} could be from a mathematical point of view. This is in fact an open subfield of functional analysis and there have been several attempts to provide a rigorous and satisfactory theory of path integrals with the support of techniques ranging from infinite-dimensional analysis to operator theory, but also from stochastics to geometry. We cannot hope to frame here more than seventy years of literature; we suggest the monographs \cite{albeverio book,fujiwara5 book,kl11,mazzucchi} as points of departure as well as the article \cite{albeverio schol} for a broad overview. We remark that only in recent times techniques from time-frequency analysis have been fruitfully used in the study of mathematical path integrals, see for instance \cite{nicola1 conv lp,nt,nt pointw}; see also \cite{T20} for an expository paper on the topic.

Among the several frameworks mentioned above we focus here on the so-called \textit{sequential approach}, introduced by Nelson in \cite{nelson}. The reasons behind this choice are manifold; first, it is probably the mathematical scheme which best meets Feynman's original insight and some of its features are nowadays part of the custom in physics literature, cf. \cite{gs,kleinert}. Moreover, the perturbative nature of this approach is very well suited to certain function spaces and operators related to time-frequency analysis, as will be elucidated later.

Nelson's approach relies on two issues. Recall that the evolution operator for the Schr\"odinger equation with $V=0$, namely $U_0(t)=e^{-\hbi tH_{0}}$, $H_0=-\hbar^2\triangle/2$, is a Fourier multiplier; an explicit representation can be derived after standard computation (cf. \cite[Sec. IX.7]{reed simon 2}):
\begin{equation} \label{free prop int}
e^{-\hbi tH_{0}}\varphi\left(x\right)=\frac{1}{\left(2\pi it\hbar \right)^{d/2}}\int_{\mathbb{R}^{d}}\exp\left(\hbi\frac{\left|x-y\right|^{2}}{2t}\right)\varphi(y)dy,\qquad\varphi\in\mathcal{S}(\mathbb{R}^{d}).
\end{equation}

The second ingredient is a well-known tool from the theory of operator semigroups. Under suitable conditions on the domain of $H_0$ and on the potential $V$\footnote{For instance one may consider a potential $V$ such that $H_0+V$ is essentially self-adjoint on $D(H_0)\cap D(V)$, cf. \cite[Sec. VIII.8]{reed simon 1}.}, the \textit{Trotter product formula} holds for the semigroup generated by $H=H_{0}+V$, namely it can be expressed in terms of a strong operator limit in $L^2(\rd)$:
\[
e^{-\hbi t(H_{0}+V)}=\lim_{n\rightarrow\infty}\left(e^{-\frac{i}{\hbar}  \frac{t}{n}H_{0}}e^{-\hbi\frac{t}{n}V}\right)^{n}.
\]
The joint application of these two results gives that the complete propagator $e^{-\hbi tH}$ can be expressed as limit of integral operators (cf.\ \cite[Thm.\ X.66]{reed simon 2}):
\begin{equation}
e^{-\hbi t\left(H_{0}+V\right)}\varphi(x)=\lim_{n\rightarrow\infty}\left(2\pi \hbar i\frac{t}{n}\right)^{-\frac{nd}{2}}\int_{\mathbb{R}^{nd}}e^{ \hbi S_{n}\left(t;x_{0},\ldots,x_{n-1},x\right)}\varphi\left(x_{0}\right)dx_{0}\ldots dx_{n-1},\label{prop ts limit}
\end{equation}
where we set
\[
S_{n}\left(t;x_{0},\ldots,x_{n-1},x\right)=\sum_{k=1}^{n}\frac{t}{n}\left[\frac{1}{2}\left(\frac{\left|x_{k}-x_{k-1}\right|}{t/n}\right)^{2}-V\left(x_{k}\right)\right], \quad x_0=y, \,x_{n}=x.
\]
The role of the phase $S_{n}\left(t;x_{0},\ldots,x_{n}\right)$ may be clarified by the following argument: given the points $x_{0},\ldots,x_{n-1},x\in\mathbb{R}^{d}$,
let $\overline{\gamma}$ be
the polygonal path through the vertices $x_{k}=\overline{\gamma}\left(kt/n\right)$,
$k=0,\ldots,n$, $x_n=x$, parametrized as
\begin{equation}\label{broken}
\overline{\gamma}\left(\tau\right)=x_{k}+\frac{x_{k+1}-x_{k}}{t/n}\left(\tau-k\frac{t}{n}\right),\qquad\tau\in\left[k\frac{t}{n},\left(k+1\right)\frac{t}{n}\right],\qquad k=0,\ldots,n-1.
\end{equation}
Hence $\overline{\gamma}$ prescribes a classical motion with constant velocity along each segment. The action functional for such path is given by
\[
S\left[\overline{\gamma}\right]=\sum_{k=1}^{n}\frac{1}{2}\frac{t}{n}\left(\frac{\left|x_{k}-x_{k-1}\right|}{t/n}\right)^{2}-\int_{0}^{t} V(\overline{\gamma}(\tau))d\tau.
\]
According to Feynman's heuristics, the relation in \eqref{prop ts limit} should be interpreted as the definition of an integral over all polygonal paths while $S_{n}\left(x_{0},\ldots,x_{n},t\right)$
is a Riemann-like, finite-dimensional approximation of the action functional evaluated on them. The regime $n\rightarrow\infty$ is then intuitively clear: the set of polygonal paths becomes the set of all paths and we recover \eqref{ker path int}.

\subsection{Convergence at the level of integral kernels}
The sequential approach discussed above seems to suggest that path integral can be made mathematically rigorous at the level of operators rather than integral kernels. This remark is reinforced by the achievements of different mathematical theories of path integrals relying on the standard operator-theoretic approach to quantum mechanics. Consider for instance the so-called \textit{time slicing approximation approach} introduced by Fujiwara in celebrated papers like \cite{fujiwara1 fund sol,fujiwara2 duke} - see also the monograph \cite{fujiwara5 book} for a systematic exposition; broadly speaking, the philosophy underlying these works is to design sequences of finite-dimensional approximation operators on $L^2(\rd)$ (in particular, oscillatory integral operators) and then prove convergence to the exact propagator $U(t)$ in some operator topology on $L^2$.

Actually, there are good reasons for not being completely satisfied with this state of affairs. The lesson of Feynman's original formulation strongly motivates a focus shift from operators to their kernels, in particular to the problem of \textit{pointwise} convergence of the integral kernels in \eqref{prop ts limit} to the kernel $u_t$ of the propagator. This may appear as an unaffordable problem in general since non-regular or even purely distribution kernels may show up, thus the problem of convergence can be hard or even pointless. A strong clue pointing in this direction comes from the already mentioned papers by Fujiwara, where convergence in a finer topology at the level of integral kernels is proved for sufficiently small time intervals and smooth potentials with at most quadratic growth.

We describe below the recent results obtained by two of the authors in \cite{nt}, where techniques of time-frequency analysis are fruitfully used to prove pointwise convergence of integral kernels in the framework provided by the sequential approach. In contrast with the aforementioned results by Fujiwara we consider bounded   potentials (the minimal regularity assumption is continuity) and we obtain the desired convergence for the kernels in suitable topologies which imply pointwise convergence. Our results are global in time, namely they hold for any fixed $t\in \rte$, where $\te$ is a set of \textit{exceptional times}. We describe below the most important features of this set from both the mathematical and physical points of view and provide explicit examples. For the moment we confine ourselves to remark that exceptional times are to be expected: recall that the involved kernels are in general tempered distributions in $\cS'(\rd)$ in view of the Schwartz kernel theorem and the problem of pointwise convergence is well-posed only when the kernels are actually functions. One may still wonder whether there is convergence at exceptional times in some weaker distribution sense. We are able to prove global-in-time convergence in this fashion, again supported by the framework of time-frequency analysis techniques and function spaces. In order to precisely state and prove the claimed results we devote the next section to collect some preparatory material.

\section{Preliminaries} \label{sec prel}
\subsection{Notation} We set $x^2=x\cdot x$, for $x\in\mathbb{R}^d$, where $x\cdot y$ is the scalar product on $\mathbb{R}^{d}$. The Schwartz class is denoted by  $\mathcal{S}(\mathbb{R}^{d})$, the space of tempered distributions by  $\mathcal{S}'(\mathbb{R}^{d})$. The brackets  $\langle  f,g\rangle $ denote the extension to $\cS' (\mathbb{R}^{d})\times \cS (\mathbb{R}^{d})$ of the inner product $\langle f,g\rangle=\int_{\rd} f(x){\overline {g(x)}}dx$ on $L^2(\mathbb{R}^{d})$, but also other related dualities described below.

The conjugate exponent $p'$ of $p \in [1,\infty]$ is defined by $1/p+1/p'=1$. The symbol $\lesssim$ means that the underlying inequality holds up to a positive constant factor $C>0$.
For any $x\in\mathbb{R}^{d}$ and $s\in\mathbb{R}$
we set $\left\langle x\right\rangle ^{s}\coloneqq(1+\left|x\right|^{2})^{s/2}$.
We choose the following normalization for the Fourier transform:
\[
\mathcal{F}f\left(\xi\right)=\int_{\mathbb{R}^{d}}e^{-2\pi ix \cdot \xi}f(x) d x,\qquad\xi\in \rd. \]
We define the translation and modulation operators: for
any $x,\xi \in\mathbb{R}^{d}$ and $f\in\mathcal{S}(\mathbb{R}^{d})$,
\[
\left(T_{x}f\right)\left(y\right)\coloneqq f(y-x),\qquad\left(M_{\xi}f\right)(y)\coloneqq e^{2 \pi i \xi \cdot y}f(y).
\]
These operators can be extended by duality on tempered distributions. The composition $\pi(x,\xi)=M_\xi T_x$ constitutes a so-called \textit{time-frequency shift}.

Given a normed linear space of distributions $X \subset \cS'(\rd)$, we set
\[ X_{\mathrm{comp}} \coloneqq \{u \in X \,:\, \supp(u) \text{ is a compact subset of }\rd \},  \]
\[ X_{\mathrm{loc}} \coloneqq \{u \in \cS'(\rd) \,:\, \phi u \in X \,\, \forall \phi \in C^{\infty}_c(\rd) \}. \]

In the rest of the paper we set $\hbar=1$ for convenience, since we are not concerned with semiclassical aspects.

\subsection{Modulation spaces}

The short-time Fourier transform (STFT) of a tempered distribution $f\in\cS'(\rd)$ with respect to the window function $g \in \cS(\rd)\setminus\{0\}$ is defined by
\begin{equation}\label{FTdef}
V_gf (x,\xi)\coloneqq \langle f, \pi(x,\xi) g\rangle=\cF (f\cdot T_x g)(\xi)=\int_{\rd}e^{-2\pi iy \cdot \xi }
f(y)\, {\overline {g(y-x)}}\,dy.
\end{equation}

The monograph \cite{gro1 book} contains a comprehensive treatment of the mathematical properties of this time-frequency representation, especially those mentioned below. We stress that the STFT is deeply connected with other well-known phase-space transforms, in particular the Wigner distribution
\begin{equation} W(f,g)(x,\xi )=\int_{\mathbb{R}^{d}}e^{-2\pi iy\cdot \xi }f\left(x+\frac{y}{2}\right)%
\overline{g\left(x-\frac{y}{2}\right)}\ dy. \label{def wig} \end{equation}

Given a non-zero window $g\in\cS(\rd)$, $s\in \bR$ and $1\leq p,q\leq \infty$, the {\it modulation space} $M^{p,q}_s(\rd)$ consists of all tempered distributions $f\in\cS'(\rd)$ such that $V_gf\in L^{p,q}_s(\rdd)$ (mixed weighted Lebesgue space), that is:
\[ \|f\|_{M^{p,q}_s}=\|V_gf\|_{L^{p,q}_s}=\left(\int_{\rd}
\left(\int_{\rd}|V_gf(x,\xi)|^p\,
dx\right)^{q/p} \la \xi \ra^{qs} d\xi \right)^{1/q}  \, <\infty ,\]
with trivial modification if $p$ or $q$ is $\infty$.
If $p=q$, we write $M^p$ instead of $M^{p,p}$, while for the unweighted case ($s=0$) we set $M_{0}^{p,q}\equiv M^{p,q}$.

It can be proved that $M^{p,q}_s(\rd)$ is a Banach space whose definition does not depend on the choice of the window $g$. We mention that many common function spaces are intimately related with modulation spaces: for instance,
\begin{enumerate}[label=(\roman*)]
	\item $M^2(\rd)$ coincides with the Hilbert space $L^2(\rd)$;
	\item $M^2_s(\rd)$ coincides with the usual $L^2$-based Sobolev space $H^s(\rd)$;
	\item the following continuous embeddings with Lebesgue spaces hold: \[ M^{p,q}_r(\rd) \hookrightarrow L^p(\rd) \hookrightarrow M^{p,q}_s(\rd), \qquad r>d/q' \text{ and } s<-d/q. \] In particular, \[ M^{p,1}(\rd) \hookrightarrow L^p(\rd) \hookrightarrow M^{p,\infty}(\rd).\]
\end{enumerate}
For these and other embeddings we address the reader to \cite{fei new segal,fei modulation 83,fe03-1,gro1 book}.

We wish to focus on distinguished members of the family of modulation spaces. The Banach-Gelfand triple $(M^1(\rd),L^2(\rd),M^{\infty}(\rd))$ proved to be a very fruitful generalization of the standard triple $(\cS(\rd),L^2(\rd),\cS'(\rd))$ for the purposes of time-frequency analysis, see \cite{CFL,fei Z,jakob} for further details. The space $M^1(\rd)$ is also known as the \textit{Feichtinger algebra} \cite{fei new segal} and it does enjoy a large number of particularly nice properties. We stress that $\cS(\rd)\subset M^1(\rd)$ and $L^{2}(\rd)$ is the completion of $M^1(\rd)$ with respect to $\left\Vert \cdot\right\Vert _{L^{2}}$ norm. Moreover $(M^1(\rd))' = M^{\infty}(\rd)$ under the duality
\[ \la f,\phi \ra = \int_{\rdd} V_gf(z)\overline{V_g\phi}(z)dz,\quad  f \in M^1(\rd), \,\, \phi \in M^{\infty}(\rd), \] for any $g \in \cS(\rd)\setminus\{0\}$, without loss of generality
with $\|g\|_2 = 1$. 
Finally, $M^1(\rd)$ is isometrically invariant under Fourier transform and arbitrary time-frequency shifts, and the embedding $M^1(\rd)\hookrightarrow M^{p,q}(\rd)$ hold for all $1 \le p,q \le \infty$.
An additional benefit of this extended framework is that one may derive a streamlined and self-consistent presentation of the mathematical foundations of signal analysis with a limited amount of technicalities, cf. \cite{feja20}.

The role of $(M^1,L^2,M^\infty)$ as a Gelfand triple is further reinforced by the \textit{Feichtinger kernel theorem} \cite{fe80,fegr92-1,fei gro kernel,cn ker}.

\begin{theorem} \label{fei ker thm} \begin{enumerate}[label=(\roman*)]
		\item Every distribution $k \in M^{\infty}(\mathbb{R}^{2d})$ defines a bounded linear operator $T : M^1(\mathbb{R}^d) \rightarrow M^{\infty}(\mathbb{R}^d)$ according to
		\[ \langle T f,g \rangle = \langle k, g \otimes \overline{f} \rangle, \quad \forall f,g\in M^1(\mathbb{R}^d),\]
		with $\left\Vert T \right\Vert_{M^1\rightarrow M^\infty} \lesssim \left\Vert k \right\Vert_{M^{\infty}}$.
		\item Any linear bounded operator $T : M^1(\mathbb{R}^d) \rightarrow M^{\infty}(\mathbb{R}^d)$ arises in this way for a unique kernel $k \in M^{\infty}(\mathbb{R}^{2d})$; moreover  $\left\Vert k \right\Vert_{M^{\infty}}\lesssim \left\Vert T \right\Vert_{M^1\rightarrow M^\infty}$.
	\end{enumerate}
\end{theorem}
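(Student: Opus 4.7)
The plan is to derive both parts from two ingredients: the duality $(M^1(\bR^n))' = M^\infty(\bR^n)$ recalled in the excerpt, and the \emph{tensor product property} of the Feichtinger algebra, namely that $g \otimes \overline{f} \in M^1(\rdd)$ whenever $f, g \in M^1(\rd)$, with $\|g \otimes \overline{f}\|_{M^1(\rdd)} \lesssim \|f\|_{M^1}\|g\|_{M^1}$. This estimate itself is easy once one takes a window of the form $\Phi = \varphi \otimes \overline{\varphi}$ with $\varphi \in \cS(\rd) \setminus \{0\}$ and observes the factorization
\[
V_\Phi(g \otimes \overline{f})(x_1, x_2, \xi_1, \xi_2) = V_\varphi g(x_1, \xi_1) \cdot \overline{V_\varphi f(x_2, -\xi_2)},
\]
so that $\|V_\Phi(g\otimes \overline{f})\|_{L^1(\rdd)} = \|V_\varphi g\|_{L^1}\|V_\varphi f\|_{L^1}$.

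For part (i), I would fix $k \in M^\infty(\rdd)$ and \emph{define} $T$ via $\la Tf, g\ra = \la k, g\otimes \overline{f}\ra$, the right-hand side being interpreted through the $M^\infty$--$M^1$ duality on $\rdd$. The chain
\[
|\la Tf, g\ra| \le \|k\|_{M^\infty}\|g \otimes \overline{f}\|_{M^1} \lesssim \|k\|_{M^\infty}\|f\|_{M^1}\|g\|_{M^1}
\]
exhibits $g \mapsto \la Tf, g\ra$ as a bounded linear functional on $M^1(\rd)$ of norm at most $C\|k\|_{M^\infty}\|f\|_{M^1}$. By the duality $(M^1)' = M^\infty$, $Tf$ is then identified with an element of $M^\infty(\rd)$ and $\|T\|_{M^1 \to M^\infty} \lesssim \|k\|_{M^\infty}$.

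For part (ii), uniqueness of $k$ follows from the density in $M^1(\rdd)$ of finite linear combinations of elementary tensors $g \otimes \overline{f}$ with $f, g \in M^1(\rd)$, a standard consequence of the Gabor atomic decomposition for $M^1$. For existence I would construct $k$ via its STFT. Fix $\varphi \in \cS(\rd)$ with $\|\varphi\|_2 = 1$, set $\Phi = \varphi \otimes \overline{\varphi}$, and define
\[
F(x_1, x_2, \xi_1, \xi_2) \coloneqq \la T(\pi(x_2, -\xi_2)\varphi),\, \pi(x_1, \xi_1)\varphi\ra.
\]
Since time-frequency shifts act isometrically on $M^1$, the boundedness of $T$ forces $\|F\|_{L^\infty} \le \|T\|_{M^1\to M^\infty}\|\varphi\|_{M^1}^2$. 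The STFT inversion formula then produces a distribution $k \in M^\infty(\rdd)$ with $V_\Phi k = F$, hence $\|k\|_{M^\infty} \asymp \|F\|_{L^\infty} \lesssim \|T\|_{M^1\to M^\infty}$. A direct computation based on the factorization of $V_\Phi$ recorded above shows $\la k, g \otimes \overline{f}\ra = \la Tf, g\ra$ whenever $f, g$ are time-frequency shifts of $\varphi$; this identity extends to all of $M^1(\rd) \times M^1(\rd)$ by sesquilinearity, continuity, and the density statement used for uniqueness.

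The main obstacle is the tensor-product lemma itself, or equivalently the identification of $M^1(\rdd)$ with the projective tensor product $M^1(\rd)\,\widehat{\otimes}_\pi\, M^1(\rd)$: this is where the distinguished structure of the Feichtinger algebra really enters, and without it both parts collapse. A secondary technical nuisance in part (ii) is to check that the STFT synthesis actually produces a tempered distribution for which the kernel identity holds in the sense of the duality pairing, rather than only formally on tensor products of time-frequency atoms.
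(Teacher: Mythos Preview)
The paper does not supply its own proof of this theorem: it is stated as a known result and attributed to the references \cite{fe80,fegr92-1,fei gro kernel,cn ker}. There is therefore nothing in the paper to compare your argument against. That said, the strategy you outline is precisely the standard one used in those references (in particular in \cite{fei gro kernel} and in Chapter~14 of \cite{gro1 book}): the tensor factorization of the STFT with a separable window, the duality $(M^1)'=M^\infty$, and the identification $M^1(\rdd)\cong M^1(\rd)\,\widehat{\otimes}_\pi\,M^1(\rd)$ for uniqueness.

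One small point of precision in part~(ii): from $F\in L^\infty(\bR^{4d})$ alone you cannot immediately assert that $F=V_\Phi k$ for some $k$, since the range of $V_\Phi$ is a proper closed subspace. The clean route is to \emph{define} $k\coloneqq V_\Phi^\ast F$ via the synthesis operator (which is bounded $L^\infty\to M^\infty$ in the weak-$\ast$ sense), and then verify directly that $\la k,g\otimes\overline{f}\ra=\la Tf,g\ra$ using Moyal's identity and the weak-$\ast$ continuity of $T$ that comes for free from boundedness $M^1\to M^\infty$. You essentially say this in your last paragraph, but phrasing it through $V_\Phi^\ast$ rather than ``inversion'' avoids the spurious intermediate claim $V_\Phi k=F$.
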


Another interesting modulation space is $M^{\infty,1}(\rd)$, also known as the \textit{Sj\"ostrand class} since it was highlighted in \cite{sjo} as an exotic symbol class still yielding bounded pseudodifferential operators on $L^2(\rd)$ (see the next section for further details, also \cite{gro2 sj,grst07}). In order to specify the regularity of functions in this space recall the definition of the Fourier-Lebesgue space: for $s\in \bR$ we set
\[
f\in\cF L_{s}^{1}(\mathbb{R}^{d})\quad\Leftrightarrow\quad\left\Vert f\right\Vert _{\mathcal{F}L_{s}^{1}}=\int_{\mathbb{R}^{d}}\left|\mathcal{F}f\left(\xi\right)\right| \langle \xi \rangle^s d\xi<\infty.
\]
\begin{proposition}[{\cite{gro1 book} and \cite[Prop. 3.4]{nt pointw}}] \label{prop pot sp}
	\begin{enumerate}
		\item $M^{\infty,1}(\rd) \subset (\mathcal{F}L^1)_{\rm loc}(\rd)\cap L^\infty(\rd)\subset C^0(\rd)\cap L^\infty(\rd)$.
		\item $(M^{\infty,1})_{\rm loc}(\rd) = (\mathcal{F}L^1)_{\rm loc}(\rd)=(\cF\cM)_{\rm loc}(\rd)$, where $\cF\cM(\rd)$ is the space of Fourier transforms of (finite) complex measures on $\rd$.
\item $\cF\cM(\rd) \subset M^{\infty,1}(\rd)$.
	\end{enumerate}
\end{proposition}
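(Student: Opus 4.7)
The plan is to verify the three items in order, reusing intermediate STFT estimates.

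For item 1, I would begin with $M^{\infty,1}\subset L^{\infty}$: fixing a window $g\in\cS(\rd)$ with $\|g\|_{L^2}=1$ and applying the STFT inversion formula $f=\iint V_gf(y,\xi)\,\pi(y,\xi)g\,dy\,d\xi$, a pointwise absolute-value estimate immediately yields $\|f\|_{L^\infty}\le\|g\|_{L^1}\|f\|_{M^{\infty,1}}$. For $M^{\infty,1}\subset(\mathcal{F}L^1)_{\rm loc}$, the decisive observation is a \emph{window trick}: any compactly supported $h\in M^{\infty,1}$ already lies in $\mathcal{F}L^1$, since choosing $\chi\in C_c^\infty$ with $\chi\equiv 1$ on $\supp h$ gives $\widehat h(\xi)=V_\chi h(0,\xi)$, so $\|\widehat h\|_{L^1}\le\int\sup_{x}|V_\chi h(x,\xi)|\,d\xi=\|h\|_{M^{\infty,1}}$. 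To apply this to $h=\phi f$ with $\phi\in C_c^\infty$ and $f\in M^{\infty,1}$, I invoke the pointwise Banach algebra property of the Sj\"ostrand class (equivalent to the Wiener-amalgam description $M^{\infty,1}=W(\cF L^1,L^\infty)$), which gives $\phi f\in M^{\infty,1}$. The final inclusion $(\mathcal{F}L^1)_{\rm loc}\cap L^\infty\subset C^0\cap L^\infty$ is classical, as $\mathcal{F}L^1\subset C^0$ via Fourier inversion and continuity is a local property.

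For item 2, the equality $(M^{\infty,1})_{\rm loc}=(\mathcal{F}L^1)_{\rm loc}$ follows from item 1 together with the reverse embedding $\mathcal{F}L^1\hookrightarrow M^{\infty,1}$. Expanding the STFT in terms of $\widehat f$,
\[
V_gf(x,\xi)=\int\widehat f(\eta)\,e^{-2\pi ix(\xi-\eta)}\,\overline{\widehat g(\eta-\xi)}\,d\eta,
\]
gives the $x$-independent bound $|V_gf(x,\xi)|\le\int|\widehat f(\eta)||\widehat g(\eta-\xi)|\,d\eta$, and Fubini produces $\|f\|_{M^{\infty,1}}\le\|\widehat f\|_{L^1}\|\widehat g\|_{L^1}$. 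The inclusion $(\mathcal{F}L^1)_{\rm loc}\subset(\cF\cM)_{\rm loc}$ is immediate from $L^1\subset\cM$. Conversely, if $f\in(\cF\cM)_{\rm loc}$ and $\phi\in C_c^\infty$, I pick $\psi\in C_c^\infty$ with $\psi\equiv 1$ on $\supp\phi$ and write $\phi f=\psi\cdot\widehat\mu$ for the relevant finite measure $\mu$; then $\widehat{\phi f}=\widehat\psi\ast\check\mu\in L^1$ by Young's inequality applied to the convolution of $\widehat\psi\in\cS\subset L^1$ with a bounded measure.

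For item 3, $\cF\cM\subset M^{\infty,1}$ is handled by a direct computation parallel to the one above: substituting $f=\widehat\mu$ into the STFT definition and interchanging the order of integration yields
\[
V_gf(x,\xi)=\int e^{-2\pi ix(\xi+\eta)}\,\cF\bar g(\xi+\eta)\,d\mu(\eta),
\]
whence $\sup_{x}|V_gf(x,\xi)|\le\int|\cF\bar g(\xi+\eta)|\,d|\mu|(\eta)$, and a final integration in $\xi$ gives $\|f\|_{M^{\infty,1}}\le\|\cF\bar g\|_{L^1}\|\mu\|_\cM$.

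The step I expect to be most delicate is the pointwise module property of $M^{\infty,1}$ used in item 1 (and implicit in closing item 2), which is not a one-line consequence of the definition; I would quote the Wiener-amalgam characterization and the corresponding Banach-algebra result from \cite{gro1 book} rather than rederive it. Everything else reduces to direct STFT estimates and Fourier inversion.
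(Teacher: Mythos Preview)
Your argument is correct. The paper itself does not supply a proof of this proposition: it cites \cite{gro1 book} and \cite[Prop.~3.4]{nt pointw} and adds only the single remark that ``the equality $(\mathcal{F}L^1)_{\rm loc}(\rd)=(\cF\cM)_{\rm loc}(\rd)$ is an immediate consequence of the fact that $L^1(\rd)$ is an ideal in the convolution algebra $\cM(\rd)$.'' Your treatment of that equality in item~2 is exactly this ideal/Young argument spelled out, so on the one point where a comparison is possible the approaches coincide. The remaining parts of your proposal (the window trick $\widehat h(\xi)=V_\chi h(0,\xi)$ for compactly supported $h$, the STFT bound $|V_gf(x,\xi)|\le(|\widehat f|\ast|\widehat g|)(\xi)$ for $f\in\cF L^1$, and the analogous bound for $f=\widehat\mu$) are standard direct arguments and are consistent with what one finds in the cited references.

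One small slip: in item~2 you write $\phi f=\psi\cdot\widehat\mu$, but what you mean is $\psi f=\widehat\mu\in\cF\cM$ and therefore $\phi f=\phi\,\widehat\mu$; the rest of the sentence (taking the Fourier transform and using $\widehat\phi\in L^1$ convolved with a finite measure) then goes through as written.
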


The equality $(\mathcal{F}L^1)_{\rm loc}(\rd)=(\cF\cM)_{\rm loc}(\rd)$ is an immediate consequence of the fact that $L^1(\rd)$ is an ideal in the convolution algebra $\cM(\rd)$.

Moreover, $M^{\infty,1}(\rd)$ is a Banach algebra under pointwise product. In fact, precise conditions are known on $p$, $q$ and $s$ in order for $M_{s}^{p,q}$ to be a Banach algebra with respect to pointwise multiplication.
\begin{proposition}[{\cite[Thm. 3.5 and Cor. 2.10]{rs mod}}] \label{Mpqs ban alg}Let $1\le p,q\le\infty$ and $s\in\mathbb{R}$. The following facts are equivalent.\\
	$(i)$ $M_{s}^{p,q}(\mathbb{R}^{d})$ is a Banach algebra for pointwise multiplication\footnote{To be precise, we provide conditions under which the embedding $M_{s}^{p,q}\cdot M_{s}^{p,q}\hookrightarrow M_{s}^{p,q}$ is continuous; this means that the algebra property holds up to a constant. It is a rather standard result that that there exists an equivalent norm for which the previous estimate holds with $C=1$ (cf. \cite[Thm.\ 10.2]{rudin fa}). This setting will be tacitly assumed whenever concerned with Banach algebras from now on.}. \\ $(ii)$ $M^{p,q}_s(\rd) \hookrightarrow L^{\infty}(\rd)$. \\ $(iii)$ Either $s=0$ and $q=1$ or $s>d/q'$.
\end{proposition}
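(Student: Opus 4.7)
The plan is to close the loop by proving $(iii)\Rightarrow(i)$, $(i)\Rightarrow(ii)$, and $(ii)\Rightarrow(iii)$, with the STFT characterisation of $M^{p,q}_s$ providing the technical backbone. Two recurring tools are the inversion formula $f(y)=\|g\|_2^{-2}\iint V_g f(x,\xi)\pi(x,\xi)g(y)\,dx\,d\xi$ for a window $g\in\cS(\rd)$, and the weight-transfer inequality $\langle\xi\rangle^s\le C\langle\eta\rangle^{|s|}\langle\xi-\eta\rangle^s$ combined with Young's convolution inequality in one variable.

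For the sufficient direction $(iii)\Rightarrow(i)$, one clean route is via the Gabor-frame discretisation of modulation spaces: $M^{p,q}_s$ is isomorphic to a weighted sequence space $\ell^{p,q}_s(\zdd)$ of Gabor coefficients, pointwise multiplication of functions translates into a discrete convolution at the coefficient level, and the problem reduces to showing that $\ell^{p,q}_s$ is a convolution algebra, which is precisely characterised by (iii). Alternatively, one may work directly with the pointwise-product-to-convolution identity
\[
V_{\phi\psi}(fh)(x,\xi)=\bigl(V_\phi f(x,\cdot)*_\xi V_\psi h(x,\cdot)\bigr)(\xi),
\]
obtained by direct Fubini on \eqref{FTdef}, together with Young's inequality after splitting the convolution integral into the symmetric regimes $|\eta|\lesssim|\xi|$ and $|\eta|\gtrsim|\xi|$; the weight conditions in (iii) are exactly those needed to control both regimes by the product $\|f\|_{M^{p,q}_s}\|h\|_{M^{p,q}_s}$.

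For $(i)\Rightarrow(ii)$, a Gelfand spectral-radius argument in the commutative Banach algebra $M^{p,q}_s$ yields $\rho(f)=\lim_n\|f^n\|_{M^{p,q}_s}^{1/n}\le \|f\|_{M^{p,q}_s}$; identifying the maximal-ideal space with $\rd$ via the density of $\cS(\rd)$ and the translation--modulation covariance of the norm, $\rho(f)=\sup_y|f(y)|=\|f\|_\infty$ follows. For $(ii)\Rightarrow(iii)$, one tests on the family $f_R:=\sum_{k\in\zd\cap B_R}M_k g$ with $g\in\cS(\rd)$, $g(0)\neq 0$: the covariance $V_g(M_k g)(x,\xi)=V_g g(x,\xi-k)$ and the Schwartz decay of $V_g g$ yield almost-orthogonality in the time-frequency plane, whence $\|f_R\|_{M^{p,q}_s}\asymp R^{d/q+s}$, while $|f_R(0)|\asymp R^d$, and the embedding (ii) forces $R^d\lesssim R^{d/q+s}$ for all large $R$, i.e., $s\ge d/q'$. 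The borderline $s=d/q'$ with $q>1$ is excluded by replacing $f_R$ with $f:=\sum_{k\in\zd\setminus\{0\}} c_k M_k g$ for $c_k=\langle k\rangle^{-d}(\log\langle k\rangle)^{-\alpha}$ and some $\alpha\in(1/q,1]$: a direct calculation shows $\sum_k|c_k|^q\langle k\rangle^{sq}<\infty$ together with $\sum_k c_k=+\infty$, so that $f\in M^{p,q}_s\setminus L^\infty$.

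The main obstacle is this endpoint exclusion: the rough power-counting only delivers the inclusive bound $s\ge d/q'$, and separating the prohibited critical case $s=d/q'$, $q>1$ from the admissible open range requires the finely tuned logarithmic sequence and a careful summability estimate at the $L^{q'}$-borderline. A secondary technical point is the Gelfand step $(i)\Rightarrow(ii)$, where one must justify that every character of $M^{p,q}_s$ is realised as a point evaluation; this ultimately rests on the density of $\cS(\rd)$ and the covariance of the modulation norm under time-frequency shifts.
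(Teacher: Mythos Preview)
The paper does not supply a proof of this proposition: it is quoted verbatim as \cite[Thm.~3.5 and Cor.~2.10]{rs mod} and used as a black box, so there is no ``paper's own proof'' to compare against.

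As to the soundness of your sketch on its own terms: the overall cycle $(iii)\Rightarrow(i)\Rightarrow(ii)\Rightarrow(iii)$ is the standard route, and your test-function argument for $(ii)\Rightarrow(iii)$, including the logarithmic endpoint construction, is the right idea. Two points deserve more care. First, in $(iii)\Rightarrow(i)$ the assertion that ``pointwise multiplication translates into a discrete convolution at the coefficient level'' is too loose: the Gabor coefficients of $fh$ are \emph{not} a simple lattice convolution of those of $f$ and $h$. Your alternative continuous route via $V_{\phi\psi}(fh)(x,\xi)=\big(V_\phi f(x,\cdot)\ast V_\psi h(x,\cdot)\big)(\xi)$ is correct, but after taking the inner $L^p_x$-norm one cannot simply invoke the convolution-algebra property of $L^q_s$; one has to control $\|F(\cdot,\eta)H(\cdot,\xi-\eta)\|_{L^p_x}$ by a product of $L^p_x$-norms, which requires an additional $L^\infty_x$-bound on one factor (this is where the embedding $M^{p,q}_s\hookrightarrow M^{\infty,1}$ under $(iii)$ enters in the literature). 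Second, in $(i)\Rightarrow(ii)$ the Gelfand argument hinges on density of $\cS(\rd)$ in $M^{p,q}_s(\rd)$, which fails when $p=\infty$ or $q=\infty$; you flag this but do not repair it. A cleaner route, which avoids spectral theory altogether, is: pick $g\in\cS(\rd)$ with $g(0)=1$ and observe that the algebra estimate gives $\|g^n\|_{M^{p,q}_s}\le C^{n-1}\|g\|_{M^{p,q}_s}^n$, while a direct STFT computation of $\|g^n\|_{M^{p,q}_s}$ for a Gaussian $g$ gives an explicit growth rate; comparing exponents yields the parameter constraint and then the embedding. This works uniformly in $p,q$.
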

We deduce that also the modulation spaces $M_{s}^{\infty}(\mathbb{R}^{d})$
with $s>d$ are Banach algebras for pointwise multiplication. In particular we have $M^{\infty}_s(\rd) \hookrightarrow M^{\infty,1}(\rd)$ for $s>d$ and the following characterization holds:
\begin{equation}\label{S000 char}
C^{\infty}_b(\mathbb{R}^{d})\coloneqq\left\{ f\in C^{\infty}(\mathbb{R}^{d})\,:\,\left|\partial^{\alpha}f\right|\leq C_{\alpha}\quad\forall\alpha\in\mathbb{N}^{d}\right\}=\bigcap_{s\ge0}M_{s}^{\infty}(\mathbb{R}^{d});\end{equation}
see \cite[Lemma 6.1]{gro3 rze} for further details.

\subsection{Weyl operators} \label{weyl sec}
The success of time-frequency analysis in the theory of pseudodifferential operators mainly relies on the following equality:
\begin{equation}
\langle \sigma^{\mathrm{w}}f,g\rangle =\langle \sigma,W(g,f)\rangle ,\qquad\forall f,g\in\mathcal{S}(\mathbb{R}^{d}),\label{def wig dual}
\end{equation}
where $\sigma\in\mathcal{S}'(\mathbb{R}^{2d})$ is the \textit{symbol} of the \textit{Weyl operator} $\sigma^{\mathrm{w}}:\mathcal{S}(\mathbb{R}^{d})\rightarrow\mathcal{S}'(\mathbb{R}^{d})$, which can be formally represented as
\[
\sigma^{\text{w}}f\left(x\right)\coloneqq\int_{\mathbb{R}^{2d}}e^{2\pi  i\left(x-y\right)\cdot \xi}\sigma\left(\frac{x+y}{2},\xi\right)f(y)dyd\xi,
\]
while $W(g,f)$ is the Wigner transform defined in \eqref{def wig}. The main benefit of a time-frequency approach to Weyl operators is that very general symbol classes may be taken into account, in particular modulation spaces - recall that classical symbol classes are usually defined by means of decay/smoothness conditions, such as the H\"ordmander classes $S_{\rho,\delta}^{m}(\mathbb{R}^{2d})$ \cite{hormander2 book 3}. Moreover, most of the properties of $\sigma^\w$ are intimately connected to those of the Wigner transform, the latter being very well established nowadays \cite{dG symp met,gro1 book}.

The composition of Weyl transforms induces a bilinear form on symbols, the so-called \emph{twisted product}: this means that the composition of two operators  $\sigma^{\mathrm{w}} \circ \rho^{\mathrm{w}}$ is in fact a Weyl operator with special symbol denoted by $\sigma \# \rho$. Explicit formulas for $\sigma \# \rho$ are known (cf. \cite{wong}) but we are more interested in the algebra structure induced on symbol spaces. It is indeed a peculiar feature of $M^{\infty,1}(\mathbb{R}^{2d})$, as well as of $M_{s}^{\infty}(\mathbb{R}^{2d})$ with $s>2d$, to enjoy a double Banach algebra structure: \begin{itemize}
	\item a commutative one with respect to the pointwise multiplication as a consequence of Proposition \ref{Mpqs ban alg};
	\item a non-commutative one with respect to the twisted product of symbols (\cite{gro3 rze,sjo}); for instance, $\sigma,\rho \in M^{\infty,1}(\rdd) \Longrightarrow \sigma \# \rho \in M^{\infty,1}(\rdd)$.
\end{itemize}
Furthermore, it turns out that the latter algebraic structure can be related to a characterizing sparse behaviour satisfied by pseudodifferential operators with symbols in those spaces, the so-called \textit{almost diagonalization property} with respect to time-frequency shifts; it can be proved that $\sigma\in M^{\infty}_s(\rdd)$ if and only if, for some (hence any) $g \in \cS(\rd)\setminus\{0\}$,
\[ |\la \sigma^w \pi(z)g,\pi(w)g\ra|  \le C\langle w-z \rangle^{-s}, \quad z,w\in \rdd. \]
In a similar fashion, $\sigma\in M^{\infty,1}(\rdd) $ if and only if there exists $H\in L^1(\rdd)$ such that
\[ |\la \sigma^w \pi(z)g,\pi(w)g\ra|  \le H(w-z), \quad z,w\in \rdd. \]
The reader may consult \cite{CGNR fio, CGNR jmp, CNR sparsity, CNT 18, gro2 sj, gro3 rze} for further details on this topic.

\section{Pointwise convergence of integral kernels}\label{sec pw conv}
The main results in \cite{nt pointw} require us to consider a slightly generalized version of the free Hamiltonian operator $H_0$ in \eqref{cauchy schr}. Let $a$ be a quadratic homogeneous polynomial on $\rdd$, namely
\[ a(x,\xi) = \frac{1}{2}x\cdot Ax + \xi \cdot B x + \frac{1}{2}\xi \cdot C \xi, \] for some symmetric matrices $A,C\in \bR^{d\times d}$ and $B\in \bR^{d\times d}$. The solution of \eqref{cauchy schr} with $H_0 = a^{\mathrm{w}}$ (the Weyl transform of $a$) and $V=0$ is given by
\[ \psi(t,x)=e^{-it H_0}\varphi(x)=\mu(\cA_t)\varphi(x), \]
where $\mu(\cA_t)$ is a \textit{metaplectic operator} - see \cite[Sec.\ 15.1.3]{dG symp met} and also \cite{CN pot mod, folland} for a complete derivation of this classic result.  A precise characterization of metaplectic operators would lead us too far, hence we just outline their main features. First, recall that the phase-space flow governed by the Hamilton equations\footnote{The factor $2\pi$ is a consequence of the normalization of the Fourier transform adopted in the paper.} \[2\pi \dot{z} = J \nabla_z a(z) = \mathbb{A}, \quad \mathbb{A}= \left(\begin{array}{cc} B & C \\ -A & -B^{\top}\end{array}\right), \]
defines a mapping \begin{equation}\label{block At} \bR \ni t  \mapsto \cA_t = e^{(t/2\pi)\mathbb{A}}= \left(\begin{array}{cc}
A_{t} & B_{t}\\
C_{t} & D_{t}
\end{array}\right) \in \mathrm{Sp}(d,\bR). \end{equation} In sloppy terms, any symplectic matrix $S \in \Spdr$ is associated  with a unitary bounded operator $\mu(S)$ on $L^2(\rd)$ which satisfies the intertwining property
\[ \mu(S)^{-1} \sigma^\w \mu(S) = (\sigma\circ S)^\w, \quad \sigma \in \cS'(\rdd). \] In particular, the classical flow $\cA_t$ is associated (up to a complex phase factor) with a family of unitary operators on $L^2(\rd)$ (for details see \cite{gro1 book}, Thm. 9.4.2) 
An explicit formula for $\mu(\cA_t)$ may be provided in some special cases: for all $t\in \bR$ such that $\cA_t$ is a \textit{free symplectic matrix}, namely such that the upper-right block $B_t$ is invertible, the corresponding metaplectic operator may be represented as a \textit{quadratic Fourier transform} \cite[Sec.\  7.2.2]{dG symp met}, namely
\begin{equation} \label{met int formula} \mu(\cA_t)\varphi(x) = c_t \lvert \det B_t\rvert^{-1/2} \int_{\rd} e^{2\pi i \Phi_t}(x,\xi) \varphi(y) dy, \qquad \varphi \in \cS(\rd), \end{equation} for suitable $c_t\in \bC$, $|c_t|=1$, where
\begin{equation}\label{phit}
\Phi_{t}\left(x,y\right)=\frac{1}{2}x \cdot D_{t}B_{t}^{-1}x-y\cdot B_{t}^{-1}x+\frac{1}{2}y\cdot B_{t}^{-1}A_{t}y,\qquad x,y\in\mathbb{R}^{d}.
\end{equation}
This representation of $\mu(\cA_t)$ is a main ingredient of our results, hence we stress that it does hold for any $t\in \rte$, where we define the set of \textit{exceptional times} as
\begin{equation}\label{def te} \te = \{ t \in \bR \,:\, \det B_t =0 \}. \end{equation}
Some of the properties of this set can be immediately deduced from the fact that it is indeed the zero set of an analytic function: apart from the case $\te=\bR$ (which trivially happens when $H=0$), $\te$ is a discrete (hence at most countable) subset of $\bR$ which always includes $t=0$ - in particular $\te=\{0\}$ in the case of the free Schr\"odinger equation ($V=0$).

 We now apply a version of Trotter formula from the theory of operator semigroups. It is known that $H_{0}=a^{\rm w}$ is a self-adjoint operator on the maximal domain (see \cite{hormander1 mehler})
\[
D\left(H_{0}\right)=\{ \psi \in L^{2}(\mathbb{R}^{d})\,:\,H_{0}\psi \in L^{2}(\mathbb{R}^{d})\} .
\]
For our purposes it is enough to assume that $V \in \cB(L^2(\rd))$, hence we consider bounded perturbations of $H_0$. In particular, $V \in L^{\infty}(\rd)$ is a suitable choice, even for possibly complex-valued potentials.

Then, we have (cf. for instance \cite[Cor. 2.7 and Ex. 2.9]{engel})
\begin{equation}\label{trotter maint}
e^{-it\left(H_{0}+V\right)}=\lim_{n\rightarrow\infty}E_{n}(t),\qquad E_{n}(t)=\Big(e^{-i\frac{t}{n}H_{0}}e^{-i\frac{t}{n}V}\Big)^{n},
\end{equation}
with convergence in the strong operator topology in $L^{2}(\mathbb{R}^{d})$. We denote by $e_{n,t}(x,y)$ the distribution kernel of $E_{n}(t)$ and by $u_t(x,y)$ that of $U(t)=e^{-it\left(H_{0}+V\right)}$.

We assume $V \in L^{\infty}(\rd)$, and we tune its regularity as follows. In view of the discussion on modulation spaces in the previous section, we have available a scale of decreasing regularity spaces.
\begin{enumerate} \item The best option for our purposes is given by $C^{\infty}_b(\rd)$, the space of smooth bounded functions with bounded derivatives of any order.
	\item Subsequently we have the (scale of) modulation spaces $M^{\infty}_s(\rd)$, $s>d$, which contain bounded continuous functions becoming less regular as $s\searrow d$ - the parameter $s$ can be thought of as a measure of (fractional) differentiability.
	\item We finally consider the Sj\"ostrand class $M^{\infty,1}(\rd)$, where even the partial regularity of the previous level is lost. We are still dealing with bounded continuous functions, which locally enjoy the mild regularity of the Fourier transform of a $L^1$ function.
\end{enumerate}

Let us first state our main result at the intermediate regularity encoded by $M^{\infty}_s(\rd)$.

\begin{theorem}[{\cite[Thm. 1.1]{nt pointw}}]
	\label{maint minfty}Let  $H_{0}=a^{\mathrm{w}}$
	as above and $V\in M_{s}^{\infty}(\mathbb{R}^{d})$,
	with $s>2d$. Let $\cA_t$ be the classical flow associated with $H_0$ as in \eqref{block At}. For any $t\in \rte$:
	\begin{enumerate}
		\item the distributions $e^{-2\pi i\Phi_{t}}e_{n,t}$, $n\geq 1$,
		and $e^{-2\pi i\Phi_{t}}u_t$
		belong to a bounded subset of $M_{s}^{\infty}(\mathbb{R}^{2d})$;
		\item $e_{n,t}\rightarrow u_t$ in $(\mathcal{F}L_{r}^{1})_{\mathrm{loc}}(\mathbb{R}^{2d})$
		for any $0<r<s-2d$, hence uniformly on compact subsets.
	\end{enumerate}
\end{theorem}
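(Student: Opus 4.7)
The overall strategy is to factor out the metaplectic propagator $\mu(\cA_t)$ from $E_n(t)$, reducing both claims to Weyl-symbol estimates on the pseudodifferential correction $F_n(t):=E_n(t)\,\mu(\cA_t)^{-1}$. The uniform-in-$n$ $M_s^\infty$-bound comes from the Banach-algebra structure of $M_s^\infty(\rdd)$ (for $s>2d$) under both pointwise and twisted products; passage from the bound to convergence in $(\cF L^1_r)_{\mathrm{loc}}$ uses a dominated-convergence argument on the Fourier side combined with the Trotter strong operator convergence.

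\textbf{Factorization and symbol estimates.} With $M=\mu(\cA_{t/n})$ and $W=e^{-i(t/n)V}$, the telescoping identity $(MW)^n = W^{(1)}\cdots W^{(n)}\,M^n$, with $W^{(k)}:=M^k\,W\,M^{-k}$ and $M^n = e^{-itH_0}=\mu(\cA_t)$ (from the one-parameter group property), yields $E_n(t) = F_n(t)\,\mu(\cA_t)$ with $F_n(t)=W^{(1)}\cdots W^{(n)}$. The metaplectic intertwining $\mu(S)\sigma^{\w}\mu(S)^{-1}=(\sigma\circ S^{-1})^{\w}$ identifies the Weyl symbol of $W^{(k)}$ as
\[ v_k(x,\xi) = e^{-i(t/n)\,V(L_k(x,\xi))}, \qquad L_k(x,\xi):=A_{-kt/n}\,x+B_{-kt/n}\,\xi. \]
Since $M_s^\infty(\rd)$ is a pointwise Banach algebra (Proposition \ref{Mpqs ban alg}, as $s>d$) and the linear surjective pullback $V\mapsto V\circ L_k$ is bounded $M_s^\infty(\rd)\to M_s^\infty(\rdd)$ with constant varying continuously in $L_k$ (reducible via an invertible change of variables to the tensor case $V\otimes 1$), the relative compactness of $\{L_k\}_{k\le n,\,n\ge 1}$ gives $\|v_k\|_{M_s^\infty(\rdd)}\le 1+C(t/n)\|V\|_{M_s^\infty}$ uniformly. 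The twisted-product Banach-algebra property of $M_s^\infty(\rdd)$ (valid for $s>2d$, as recalled in Section \ref{weyl sec}) then yields
\[ \|\sigma(F_n)\|_{M_s^\infty}=\|v_1\#\cdots\#v_n\|_{M_s^\infty}\le\prod_{k=1}^n\|v_k\|_{M_s^\infty}\le \exp\!\big(Ct\|V\|_{M_s^\infty}\big), \]
uniformly in $n$. The symbol-to-kernel correspondence is a bounded bijection on $M_s^\infty(\rdd)$, so the kernel $f_n$ of $F_n(t)$ is uniformly bounded in $M_s^\infty(\rdd)$.

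\textbf{Factoring the phase and assertion (1).} The identity $E_n(t)=F_n(t)\mu(\cA_t)$ gives the kernel representation
\[ e_{n,t}(x,y) = c_t|\det B_t|^{-1/2}\int_{\rd} f_n(x,z)\,e^{2\pi i\Phi_t(z,y)}\,dz. \]
Multiplying by $e^{-2\pi i\Phi_t(x,y)}$ and substituting $w=z-x$, and expanding $\Phi_t(z,y)-\Phi_t(x,y)$ (which is linear in $y$ and quadratic in $z-x$), exhibits $e^{-2\pi i\Phi_t}e_{n,t}$ as the image of $f_n$ under a finite composition of $M_s^\infty(\rdd)$-preserving operations: a linear change of variables in $(x,z)$, chirp multiplication by the quadratic phase $e^{i\pi\,w^\top D_t B_t^{-1}w}$ (a metaplectic action on $M_s^\infty$), a partial Fourier transform in $w$, and the linear bijection $y\mapsto(B_t^\top)^{-1}y$ (invertible precisely because $t\in\rte$). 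This delivers the uniform $M_s^\infty$-bound of (1); the analogous bound on $e^{-2\pi i\Phi_t}u_t$ follows from weak-$*$ compactness in $M_s^\infty(\rdd)$ combined with the distributional convergence $e_{n,t}\to u_t$ described next.

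\textbf{Convergence (2) and main obstacle.} Trotter's strong operator convergence on $L^2(\rd)$ implies $e_{n,t}\to u_t$ in $\cS'(\rdd)$. Fix $r<s-2d$, pick $r'\in(r,s-2d)$, and take $\phi\in C_c^\infty(\rdd)$; set $h_n:=\phi(e_{n,t}-u_t)$. The $h_n$ are compactly supported and uniformly bounded in $M_s^\infty(\rdd)$, hence in $M^{\infty,1}_{r'}(\rdd)$ by integrating the STFT bound $\sup_x|V_g h_n(x,\zeta)|\le C\langle\zeta\rangle^{-s}$ against $\langle\zeta\rangle^{r'}d\zeta$; by Proposition \ref{prop pot sp} (weighted, localized version) $\widehat{h_n}$ is uniformly bounded in $L^1_{r'}(\rdd)$. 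The uniform tail estimate $\int_{|\zeta|>R}|\widehat{h_n}(\zeta)|\langle\zeta\rangle^r d\zeta\le CR^{r-r'}\to 0$ as $R\to\infty$, combined with the pointwise convergence $\widehat{h_n}(\zeta)\to 0$ (from $\cS'$-convergence of $h_n$) and the uniform pointwise bound on $\widehat{h_n}$ arising from the compact support, yields via a Vitali/dominated-convergence argument $\|h_n\|_{\cF L^1_r}\to 0$; this is the desired convergence in $(\cF L^1_r)_{\mathrm{loc}}$, and the uniform-on-compacts statement follows from the embedding $(\cF L^1)_{\mathrm{loc}}\hookrightarrow C^0$. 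The main technical obstacle is securing uniform-in-$n$ constants in Step 2, namely normalizing the $M_s^\infty$-norm so that the pointwise and twisted-product algebra inequalities hold with constant one, and establishing a quantitative pullback estimate $\|V\circ L_k\|_{M_s^\infty(\rdd)}\lesssim\|V\|_{M_s^\infty(\rd)}$ uniformly over the compact family of linear surjections $L_k$.
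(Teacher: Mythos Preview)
Your overall strategy matches the sketch given in the paper: factor $E_n(t)=a_{n,t}^{\mathrm w}\,\mu(\cA_t)$ via the telescoping identity and metaplectic covariance, control the Weyl symbols $a_{n,t}$ uniformly in $M^\infty_s(\rdd)$ using the double (pointwise and twisted) Banach-algebra structure available for $s>2d$, and then upgrade the Trotter strong $L^2$-convergence to $(\cF L^1_r)_{\mathrm{loc}}$-convergence by interpolating against the uniform bound. The convergence argument in your final paragraph is essentially correct.

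There is, however, a genuine gap in the passage from the symbol $a_{n,t}$ to the amplitude $\widetilde{a_{n,t}}=c\,e^{-2\pi i\Phi_t}e_{n,t}$. You assert that the Weyl symbol-to-kernel map is a bounded bijection on $M^\infty_s(\rdd)$, and then that chirp multiplication and a partial Fourier transform are $M^\infty_s$-preserving operations; none of these claims is correct. The weight $\langle\zeta\rangle^s$ in the definition of $M^\infty_s$ acts on the \emph{frequency} side of the STFT only. A metaplectic operator $\mu(S)$ on $L^2(\rdd)$ satisfies $|V_g(\mu(S)F)(Z)|=|V_{g'}F(S^{-1}Z)|$ for $Z\in\bR^{4d}$, so $\mu(S)$ preserves $M^\infty_s(\rdd)$ precisely when the frequency block of $SZ$ is controlled by the frequency block of $Z$ alone. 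This holds for invertible linear coordinate changes ($S$ block-diagonal) but \emph{fails} both for partial Fourier transforms (which swap some base and frequency coordinates) and for multiplication by a nondegenerate chirp $e^{i\pi z^\top Qz}$ (which shears the frequency variable by $Qz$). In particular the intermediate kernel $f_n$ of $F_n(t)$ need not lie in $M^\infty_s(\rdd)$ at all, so your two-step route through $f_n$ breaks down.

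What rescues the argument --- and this is precisely the content of the ``technical lemmas'' the paper alludes to --- is that the \emph{composite} map $a_{n,t}\mapsto\widetilde{a_{n,t}}$ is a single metaplectic operator on $L^2(\rdd)$, and one has to check directly that its symplectic matrix on $\bR^{4d}$ has the required block structure. A cleaner route, consonant with the tools recalled in Section~\ref{weyl sec}, is to bypass kernels entirely and use the almost-diagonalization characterization: the uniform bound $|\langle a_{n,t}^{\mathrm w}\pi(z)g,\pi(w)g\rangle|\lesssim\langle w-z\rangle^{-s}$ together with the covariance $\mu(\cA_t)\pi(z)=c\,\pi(\cA_t z)\mu(\cA_t)$ yields directly the Gabor matrix decay of $E_n(t)$, from which the $M^\infty_s$-membership of $\widetilde{a_{n,t}}$ can be read off. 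The obstacle you flag (normalizing algebra constants) is real but routine; the step you treat as routine is where the real work lies.
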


The previous convergence result is expected to improve in the smooth context in view of the characterization given in \eqref{S000 char}.
\begin{corollary}[{\cite[Cor. 1.2]{nt pointw}}]
	\label{maint s000} Let  $H_{0}=a^{\mathrm{w}}$
	as above and $V\in C^{\infty}_b(\rd)$. Let $\cA_t$ denote the classical flow associated with $H_0$ as in \eqref{block At}. For any $t\in \rte$:
	\begin{enumerate}
		\item the distributions $e^{-2\pi i\Phi_{t}}e_{n,t}$, $n\geq 1$,
		and $e^{-2\pi i\Phi_{t}}u_t$
		belong to a bounded subset of $C^{\infty}_b(\mathbb{R}^{2d})$;
		\item $e_{n,t}\rightarrow u_t$ in $C^{\infty}(\rdd)$,
		hence uniformly on compact  subsets together with any derivatives.
	\end{enumerate}
\end{corollary}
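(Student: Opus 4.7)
The plan is to derive the corollary directly from Theorem \ref{maint minfty} by letting the smoothness parameter $s$ tend to infinity. Since $V\in C^\infty_b(\rd)$, the characterization \eqref{S000 char} gives $V\in M^\infty_s(\rd)$ for every $s\ge 0$; in particular, Theorem \ref{maint minfty} is applicable with arbitrarily large $s>2d$. The two statements of the corollary then follow by respectively intersecting the $M^\infty_s$--bounds in $s$ and by choosing $s$ large enough to control an arbitrary derivative.

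\textbf{Step 1 (boundedness).} For each fixed $s>2d$, Theorem \ref{maint minfty}(1) produces a constant $C_s$ with $\|e^{-2\pi i\Phi_t}e_{n,t}\|_{M^\infty_s(\rdd)}\le C_s$ uniformly in $n$, and similarly for $e^{-2\pi i\Phi_t}u_t$. Intersecting over all $s>2d$ and invoking the analogue of \eqref{S000 char} on $\rdd$, the whole family lies in $C^\infty_b(\rdd)$. To upgrade these modulation-space estimates into uniform bounds on all $C^k$-seminorms I would use the continuity of $\partial^\alpha:M^\infty_{s+|\alpha|}(\rdd)\to M^\infty_s(\rdd)$ (a direct consequence of the STFT intertwining rule for derivatives) together with the embedding $M^\infty_s(\rdd)\hookrightarrow L^\infty(\rdd)$ for $s>2d$. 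Composing yields $\|\partial^\alpha f\|_\infty \lesssim \|f\|_{M^\infty_{s+|\alpha|}}$, so a uniform bound on every $M^\infty_s$-norm translates into a uniform bound on every sup-norm of a derivative, which is exactly boundedness in $C^\infty_b(\rdd)$.

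\textbf{Step 2 (convergence).} Fix a compact $K\subset\rdd$ and a multi-index $\alpha$. Choose $s>2d+|\alpha|+1$ and pick $r\in(|\alpha|,s-2d)$. Theorem \ref{maint minfty}(2) gives $e_{n,t}\to u_t$ in $(\cF L^1_r)_{\mathrm{loc}}(\rdd)$; selecting a cutoff $\chi\in C^\infty_c(\rdd)$ with $\chi\equiv 1$ on a neighborhood of $K$, this yields $\chi(e_{n,t}-u_t)\to 0$ in $\cF L^1_r(\rdd)$. Combining the Leibniz rule, the inclusion $\cF L^1_r\hookrightarrow\cF L^1_{|\alpha|}$, and the elementary estimate $\|\partial^\beta g\|_{L^\infty}\le (2\pi)^{|\beta|}\|g\|_{\cF L^1_{|\beta|}}$, we obtain $\partial^\alpha e_{n,t}\to\partial^\alpha u_t$ uniformly on $K$. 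Since $K$ and $\alpha$ are arbitrary, this is precisely convergence in $C^\infty(\rdd)$.

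\textbf{Main obstacle.} The argument is largely bookkeeping: all the analytic substance has been packed inside Theorem \ref{maint minfty}, and the corollary amounts to a clean passage $s\to\infty$. The only mildly delicate point is verifying the two ``transfer'' estimates used above---continuity of derivation on the weighted modulation scale, and the Sobolev-type control $\|\partial^\alpha f\|_\infty\le (2\pi)^{|\alpha|}\|f\|_{\cF L^1_{|\alpha|}}$---but both are standard within the time-frequency framework developed in Section \ref{sec prel}.
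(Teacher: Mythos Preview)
Your proposal is correct and follows exactly the route the paper indicates: the paper does not spell out a proof of the corollary but merely points to the characterization \eqref{S000 char}, and your argument is precisely the ``$s\to\infty$'' passage from Theorem \ref{maint minfty} that this hint calls for. The two transfer estimates you flag (continuity of $\partial^\alpha$ on the $M^\infty_s$ scale and the $\cF L^1_{|\alpha|}\hookrightarrow C^{|\alpha|}$ embedding) are indeed standard, and your Step 2 could even be streamlined by noting that on $K$ one has $\partial^\alpha(e_{n,t}-u_t)=\partial^\alpha\big(\chi(e_{n,t}-u_t)\big)$, so the Leibniz rule is not actually needed.
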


It is interesting to compare this result with those obtained by Fujiwara in \cite{fujiwara2 duke}, where convergence at the level of kernels in $C^{\infty}_b$-sense for short times was proved. In spite of different assumptions and approximation schemes, we stress that our result is global in time.

We conclude with a convergence result in the same spirit, for potentials in the Sj\"ostrand class.

\begin{theorem}[{\cite[Thm. 1.3]{nt pointw}}]
	\label{maint sjo} Let  $H_{0}=a^{\mathrm{w}}$
	as discussed above and $V\in M^{\infty,1}(\rd)$. Let $\cA_t$ denote the classical flow associated with $H_0$ as in \eqref{block At}. For any $t\in \rte$:
	\begin{enumerate}
		\item the distributions $e^{-2\pi i\Phi_{t}}e_{n,t}$, $n\geq 1$,
		and $e^{-2\pi i\Phi_{t}}u_t$
		belong to a bounded subset of $M^{\infty,1}(\mathbb{R}^{2d})$;
		\item $e_{n,t}\rightarrow u_t$ in $(\mathcal{F}L^{1})_{\mathrm{loc}}(\rdd)$,
		hence uniformly on compact subsets.
	\end{enumerate}
\end{theorem}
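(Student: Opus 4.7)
The proof mirrors that of Theorem \ref{maint minfty}, replacing the weighted scale $M^{\infty}_s$ by the Sj\"ostrand class $M^{\infty,1}$. The key inputs are the double Banach algebra structure of $M^{\infty,1}(\rdd)$ (pointwise and twisted product, recalled in Section~\ref{weyl sec}) together with its invariance under symplectic substitutions, a direct consequence of the almost-diagonalization characterization. First I would factor $E_n(t) = \mu(\cA_t)\circ K_n(t)$, where $K_n(t) = \mu(\cA_t)^{-1}E_n(t)$. Pushing all $n$ copies of $e^{-i(t/n)H_0} = \mu(\cA_{t/n})$ to the left by means of the intertwining $\mu(S)^{-1}\sigma^{\mathrm{w}}\mu(S) = (\sigma\circ S)^{\mathrm{w}}$ and of the semigroup property $\mu(\cA_t) = \mu(\cA_{t/n})^n$ yields $K_n(t) = \sigma_{n,t}^{\mathrm{w}}$ with
\[ \sigma_{n,t} = v_{n-1}\# v_{n-2}\#\cdots\# v_0, \qquad v_k(x,\xi) = \exp\bigl(-i(t/n)\,V(A_{kt/n}x + B_{kt/n}\xi)\bigr). \]

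The heart of the argument is to show $\|\sigma_{n,t}\|_{M^{\infty,1}(\rdd)}\le C(t)$ uniformly in $n$. By symplectic invariance of $M^{\infty,1}$, each function $V(A_{kt/n}\,\cdot + B_{kt/n}\,\cdot)$ lies in $M^{\infty,1}(\rdd)$ with norm uniformly bounded in $0\le k \le n-1$; using the pointwise Banach algebra property of $M^{\infty,1}$ (Proposition~\ref{Mpqs ban alg}), one expands $v_k = 1 + r_k$ with $\|r_k\|_{M^{\infty,1}} \le C|t|/n$ for $n$ large. Working with an equivalent norm on $M^{\infty,1}$ that turns the twisted product into a contraction (possible since $M^{\infty,1}$ is a unital Banach algebra, cf.\ the footnote to Proposition~\ref{Mpqs ban alg}), iteration of the algebra inequality yields
\[ \|\sigma_{n,t}\|_{M^{\infty,1}} \le \prod_{k=0}^{n-1}(1 + C|t|/n) \le e^{C|t|}, \]
uniformly in $n$. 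The kernel $e_{n,t}$ is then obtained by composing $\sigma_{n,t}^{\mathrm{w}}$ with $\mu(\cA_t)$ and using \eqref{met int formula}: a change of variable $z = y + w$ in the resulting integral turns the map $\sigma_{n,t}\mapsto e^{-2\pi i\Phi_t}e_{n,t}$ into an explicit composition of a partial Fourier transform, an affine change of variables, and multiplication by a unimodular quadratic chirp, all of which preserve $M^{\infty,1}(\rdd)$ by the symplectic invariance just invoked. This establishes item~(1) for $e_{n,t}$.

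For the passage to the limit, the strong-$L^2$ Trotter convergence \eqref{trotter maint} forces $e_{n,t}\to u_t$ in $\mathcal{S}'(\rdd)$; combined with the uniform $M^{\infty,1}$-bound and Banach--Alaoglu (using $M^{\infty,1} = (M^{1,\infty})^*$), the same bound descends to $u_t$, completing (1). For (2), fix $\varphi \in C^{\infty}_c(\rdd)$: by Proposition~\ref{prop pot sp}(2), $\varphi(e_{n,t} - u_t)$ lies in a bounded subset of $(\mathcal{F}L^1)_{\mathrm{comp}}(\rdd)$ and converges to zero in $\mathcal{S}'(\rdd)$; the argument of \cite{nt pointw} then upgrades this to $\mathcal{F}L^1$-norm convergence, hence to $(\mathcal{F}L^1)_{\mathrm{loc}}$-convergence and, via $\mathcal{F}L^1\hookrightarrow C^0$, to local uniform convergence. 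The main obstacle is the uniform control of the $n$-fold twisted product: each factor is only a perturbation of the identity of size $O(1/n)$, and if the algebra constant of $\#$ were strictly greater than $1$ the $n$-fold product would blow up as $n\to\infty$; the equivalent-norm trick for unital Banach algebras is what makes the telescoping bound possible, just as in the $M^{\infty}_s$ analog.
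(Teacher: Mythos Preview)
Your outline follows the same route that the paper sketches (and that \cite{nt pointw} carries out in full): factor $E_n(t)$ as a Weyl operator composed with $\mu(\cA_t)$, control the $n$-fold twisted product uniformly in $n$ via the $(1+C|t|/n)^n\le e^{C|t|}$ device in the unital Banach algebra $(M^{\infty,1},\#)$, transfer the symbol bound to the amplitude $\widetilde{a_{n,t}}$ in \eqref{eq ent form}, and then upgrade the $\cS'$-convergence coming from \eqref{trotter maint} using the uniform $M^{\infty,1}$ bound. The only difference with the paper's sketch is the side on which you place the Weyl factor, which is immaterial by the intertwining relation.

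One genuine point of caution, however: your justification of the symbol-to-amplitude step is not correct as written. A bare partial Fourier transform does \emph{not} preserve $M^{\infty,1}(\rdd)$---take $\sigma=1$, whose Weyl kernel $\delta(x-y)$ fails to lie in $M^{\infty,1}(\rdd)$---so you cannot argue operation-by-operation that ``partial Fourier transform, affine change of variables, and chirp multiplication all preserve $M^{\infty,1}$''. What is true, and is precisely one of the ``technical lemmas'' the paper alludes to after \eqref{eq ent form}, is that the \emph{composite} map $a_{n,t}\mapsto e^{-2\pi i\Phi_t}e_{n,t}$ is bounded on $M^{\infty,1}$ for each fixed $t\in\rte$: working out the kernel of $a^{\mathrm w}\mu(\cA_t)$ and stripping the phase $e^{2\pi i\Phi_t}$ produces two oscillatory integrals whose Fourier-type parts effectively cancel (this is where the invertibility of $B_t$ enters), leaving only chirp multiplications and invertible linear substitutions, which do preserve $M^{\infty,1}$. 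So the conclusion you state is right, but the argument needs this sharper lemma rather than the claim about partial Fourier transforms.
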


We stress that a typical potential setting in the papers by Albeverio and coauthors is ``harmonic oscillator plus a bounded perturbation'', the latter in the form of the Fourier transform of a (finite) complex measure on $\rd$ - cf. \cite{albeverio book} and the references therein. While those results rely on completely different techniques (in particular, infinite-dimensional oscillatory integral operators), in view of the embedding  $\cF\cM(\rd) \subset M^{\infty,1}(\rd)$ proved in \cite[Prop. 3.4]{nt} we are able to cover this class of potentials too.

In addition to the regularity properties mentioned insofar, our choice of modulation space is particularly well suited to the problem in view of the rich algebraic structure discussed in Section \ref{sec prel}. The key of the proofs is that for $t\in \rte$ the approximate operator $E_n(t)$ can be expressed in integral form and a manageable form of the kernel $e_{n,t}$ can be derived. In particular, with the help of some technical lemmas we are able to write
\begin{align}
E_{n}\left(t\right)\varphi(x) & =a_{n,t}^\w\,\mu\left(\mathcal{A}_{t}\right)\varphi (x) \nonumber \\
& =c(t)\left| \det B_{t} \right|^{-1/2}\int_{\mathbb{R}^{d}}e^{2\pi i\Phi_{t}\left(x,y\right)}\widetilde{a_{n,t}}\left(x,y\right)\varphi \left(y\right)dy,\label{eq ent form}
\end{align}
where $\Phi_t$ is as in \eqref{phit} and $\{a_{n,t}\},\{\widetilde{a_{n,t}}\} \subset M^{\infty}_s(\rdd)$ are bounded sequences of symbols for fixed $t\in \rte$.

\section{Results on integral kernels at exceptional times}
The occurrence of a set of exceptional times in Theorems \ref{maint minfty} and \ref{maint sjo} comes not as a surprise from a mathematical point of view: it may happen indeed that the integral kernel of the evolution operator degenerates into a distribution. A standard example of this phenomenon is provided by the harmonic oscillator, namely
\[ i\partial_t \psi = -\frac{1}{4\pi}\triangle \psi + \pi|x|^2 \psi. \] The integral kernel of the corresponding evolution operator is known as the \textit{Mehler kernel} and can be explicitly characterized \cite{dG symp met,kapit}: for $k \in \bZ$,
\begin{equation}\label{mehler} u_t(x,y) = \begin{cases} c(k) |\sin t|^{-d/2}\exp \left( \pi i \frac{x^2+y^2}{\tan t} - 2\pi i \frac{x\cdot y}{\sin t} \right) & (\pi k < t < \pi(k+1)) \\ c'(k)\delta((-1)^k x-y) & (t=k\pi) \end{cases}, \end{equation} for suitable phase factors $c(k),c'(k)\in \bC$. This shows the expected degenerate behaviour at integer multiples of $\pi$, which is consistent with the fact that the associated classical flow $\cA_t$ is given by
\[ \cA_t = \left( \begin{array}{cc} (\cos t) I & (\sin t) I \\ -(\sin t)I & (\cos t)I \end{array} \right), \] where $I \in \bR^{d\times d}$ is the identity matrix. Hence we retrieve $\te= \{ t\in \bR \,:\, \sin t = 0\} = \{k\pi:\ \, k \in \bZ\}$.

We may wonder whether convergence of integral kernels still occurs in some distributional sense, hopefully better than the broadest one (that is $\cS'(\rdd)$). In view of the discussion in Section \ref{sec prel} on the triple $(M^1,L^2,M^\infty)$, a suitable setting may be provided by $M^{\infty}$. We have indeed a general result for the kernels of strongly convergent sequences of operators in $L^2$.

\begin{theorem}\label{thm conv minfty}
	Let $\{A_n\}\subset \cB(L^2(\rd))$, $n \in \bN$, be a sequence of bounded linear operators on $L^2(\rd)$ with associated distribution kernels $\{ a_n\} \subset \cS'(\rdd)$, and $A \in \cB(L^2(\rd))$ with distribution kernel $a \in \cS'(\rdd)$. Assume that $A_n \to A$ in the strong operator topology. Then:
	\begin{enumerate}
		\item $a_n,a \in M^{\infty}(\rdd)$, $n \in \bN$;
		\item $a_n \to a$ in the weak-* topology on $M^\infty(\rdd)$.
	\end{enumerate} In particular we have $a_n \to a$ in $\cF L^\infty_{\mathrm{loc}}(\rdd)$, the latter space endowed with the topology $\sigma((\cF L^\infty)_{\mathrm{loc}}(\rdd),(\cF L^1)_{\mathrm{comp}}(\rdd))$.
\end{theorem}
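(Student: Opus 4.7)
My plan is to bootstrap everything from two observations: strong convergence in $L^2$ automatically gives a uniform operator bound via Banach--Steinhaus, and the Feichtinger kernel theorem (Theorem \ref{fei ker thm}) converts operator bounds $M^1\to M^\infty$ into norm bounds for kernels in $M^\infty(\rdd)$.

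For claim (1), strong convergence $A_n f\to Af$ in $L^2(\rd)$ for every $f$ together with Banach--Steinhaus gives $\sup_n \|A_n\|_{L^2\to L^2}<\infty$. Composing with the continuous embeddings $M^1(\rd)\hookrightarrow L^2(\rd)\hookrightarrow M^\infty(\rd)$ shows that the $A_n$ (and $A$) are bounded $M^1(\rd)\to M^\infty(\rd)$ uniformly in $n$, so Theorem \ref{fei ker thm} places $a_n, a$ in $M^\infty(\rdd)$ with $\sup_n \|a_n\|_{M^\infty}<\infty$.

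For claim (2), I would first test weak-* convergence against tensor products $g\otimes\overline{f}$, $f,g\in M^1(\rd)$ (which lie in $M^1(\rdd)$ because the STFT of a tensor product factorizes). The kernel theorem pairing together with $L^2$ strong convergence gives
\[ \la a_n, g\otimes\overline{f}\ra = \la A_n f, g\ra \to \la Af, g\ra = \la a, g\otimes\overline{f}\ra. \]
The linear span of such tensor products is dense in $M^1(\rdd)$ (using density of $\cS(\rd)\otimes\cS(\rd)$ in $\cS(\rdd)$, and of $\cS(\rdd)$ in $M^1(\rdd)$). Combined with the uniform $M^\infty$-bound from (1), a standard three-$\epsilon$ argument upgrades convergence on this dense set to weak-* convergence in $M^\infty(\rdd)$.

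For the final assertion I need two embeddings, $M^\infty(\rdd)\hookrightarrow(\cF L^\infty)_{\mathrm{loc}}(\rdd)$ and $(\cF L^1)_{\mathrm{comp}}(\rdd)\hookrightarrow M^1(\rdd)$, both of which follow from direct STFT computations exploiting modulation invariance of the $M^1$-norm together with compact support to localize the $x$-integration in $V_g$. Once these inclusions are in place, the weak-* convergence from (2) restricts to convergence against any $\phi\in(\cF L^1)_{\mathrm{comp}}(\rdd)$, giving exactly the asserted $\sigma((\cF L^\infty)_{\mathrm{loc}},(\cF L^1)_{\mathrm{comp}})$-topology. The main technical hurdle is the second embedding, which is natural from the Wiener-amalgam viewpoint on $M^1$ but requires careful verification; everything else is a clean assembly of the Feichtinger kernel theorem, Banach--Steinhaus, and a density argument.
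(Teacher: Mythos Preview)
Your argument is correct, and for claim (1) it coincides with the paper's proof (Banach--Steinhaus plus the embeddings $M^1\hookrightarrow L^2\hookrightarrow M^\infty$ and Theorem \ref{fei ker thm}). For claim (2), however, you take a different route: the paper observes that strong $L^2$-convergence forces $a_n\to a$ in $\cS'(\rdd)$, hence $V_g a_n(z)\to V_g a(z)$ pointwise in $z\in\bR^{4d}$; combined with the uniform bound $|V_g a_n|\le C$ from (1), dominated convergence in the duality integral $\la a_n,\varphi\ra=\int V_g a_n\,\overline{V_g\varphi}$ gives weak-* convergence directly for every $\varphi\in M^1(\rdd)$. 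Your approach instead verifies convergence on the dense span of elementary tensors via $\la a_n,g\otimes\overline{f}\ra=\la A_n f,g\ra$ and then upgrades by a three-$\epsilon$ argument using the uniform $M^\infty$-bound. Both are short and natural; the paper's version is slightly more ``time-frequency native'' (it works entirely at the level of the STFT and avoids any density statement for tensors in $M^1(\rdd)$), while yours has the modest advantage of making the operator-to-kernel link completely explicit through the kernel pairing. Your treatment of the final $(\cF L^\infty)_{\rm loc}$ statement is also more detailed than the paper's, which does not spell out the embedding $(\cF L^1)_{\rm comp}(\rdd)\hookrightarrow M^1(\rdd)$; your Wiener-amalgam sketch for this (choose a compactly supported window so that $f\cdot T_x g$ vanishes for $x$ outside a compact set, and use that $\cF L^1$ is a Banach algebra) is the right idea.
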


\begin{proof}
We have that $\{A_n\}$ is a bounded sequence in $\cB(L^2(\rd))$ as a consequence of the uniform boundedness principle, hence also in $\cB(M^1(\rd),M^\infty(\rd))$. The Feichtinger kernel theorem (Theorem \ref{fei ker thm}) yields that the kernels $a_n$ belong to a bounded subset of $M^{\infty}(\rdd)$. Similarly, $A \in \cB(L^2(\rd)) \Rightarrow a \in M^\infty(\rdd)$. For the second part of the claim we remark that $A_n \to A$ in the strong operator topology implies that $a_n \to a$ in $\cS'(\rdd)$. Therefore, for any fixed non-zero $g \in \cS(\rd)$ we have $V_ga_n \to V_g a$ pointwise in $\rdd$. Moreover, we have the estimate $|V_g a_n(x,\xi)|\le C$, for some constant $C>0$ independent of $n$ by the first part of the claim. Hence, for any $\varphi \in M^1(\rd)$ we have
\begin{align*}
\la a_n,\varphi \ra & = \int_{\rdd}V_ga_n(x,\xi) \overline{V_g\varphi(x,\xi)} dxd\xi \\ & \to \int_{\rdd}V_ga(x,\xi) \overline{V_g\varphi(x,\xi)} dxd\xi = \la a,\varphi\ra, \end{align*} by the dominated convergence theorem.
\end{proof}
It would be interesting to prove the  boundedness of $a_n$ in $M^\infty(\rdd)$ in Theorem \ref{thm conv minfty} without using the uniform boundedness principle, although it could be not immediate. 

A straightforward application of this result allows us to prove global-in-time convergence of integral kernels, although in a weaker sense than before.

\begin{corollary}\label{cor conv minfty}
	Assume $V \in L^{\infty}(\rd)$. Let $e_{n,t}\in \cS'(\rdd)$ be the distribution kernel of the Feynman-Trotter parametrix $E_n(t)$ in \eqref{trotter maint} and $u_t \in \cS'(\rdd)$ be the kernel of the Schr\"odinger evolution operator $U(t)$ associated with the Cauchy problem \eqref{cauchy schr}. For any $n \in \bN$ and $t \in \bR$ we have $e_{n,t}, u \in M^\infty(\rdd)$. Moreover, $e_{n,t}\to u_t$  in the weak-* topology on $M^{\infty}(\rdd)$ for any fixed $t \in \bR$.
\end{corollary}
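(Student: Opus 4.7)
The claim is essentially a direct corollary of Theorem \ref{thm conv minfty}, so the plan is simply to verify that its hypotheses are met by the sequence $\{E_n(t)\}$ with limit $U(t)$, for each fixed $t\in\bR$.

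First I would check that $E_n(t)$ and $U(t)$ are bounded operators on $L^2(\rd)$. The operator $e^{-i(t/n)H_0}$ is unitary on $L^2(\rd)$ since $H_0=a^{\w}$ is self-adjoint on its maximal domain. As for $e^{-i(t/n)V}$, this is a multiplication operator by a function whose modulus is at most $e^{(|t|/n)\|\mathrm{Im}\,V\|_{\infty}}$, hence bounded on $L^2(\rd)$ (the case of real-valued $V$ is of course unitary). Composing $n$ such pairs yields $E_n(t)\in\cB(L^2(\rd))$. For the genuine propagator $U(t)=e^{-it(H_0+V)}$, recall that since $V\in L^{\infty}(\rd)$ is a bounded perturbation of the self-adjoint $H_0$, the operator $H_0+V$ generates a strongly continuous group on $L^2(\rd)$, hence $U(t)\in\cB(L^2(\rd))$.

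Next, the Trotter formula \eqref{trotter maint} stated in the preceding section provides precisely the strong operator convergence $E_n(t)\to U(t)$ in $L^2(\rd)$ for each fixed $t\in\bR$, under the standing hypothesis $V\in L^{\infty}(\rd)$. With both ingredients in place, Theorem \ref{thm conv minfty} applies verbatim: part (1) gives $e_{n,t},u_t\in M^{\infty}(\rdd)$, and part (2) yields $e_{n,t}\to u_t$ in the weak-$\ast$ topology $\sigma(M^{\infty}(\rdd),M^{1}(\rdd))$.

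There is no real obstacle here — the work has been absorbed into Theorem \ref{thm conv minfty}, which in turn relies on the Feichtinger kernel theorem and the uniform boundedness principle. The only point that requires a brief comment is that, in contrast with Theorems \ref{maint minfty} and \ref{maint sjo}, no exclusion of exceptional times $t\in\te$ is needed: the Trotter limit holds for every $t\in\bR$ as strong limit on $L^2$, and the weak-$\ast$ convergence of kernels is coarse enough to survive the possible distributional degeneration of $u_t$ at $t\in\te$.
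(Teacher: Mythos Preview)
Your proposal is correct and matches the paper's own treatment: the corollary is stated immediately after Theorem~\ref{thm conv minfty} as ``a straightforward application of this result'', with no separate proof given. Your verification that $E_n(t),U(t)\in\cB(L^2(\rd))$ and that the Trotter formula~\eqref{trotter maint} supplies the strong convergence is exactly the check needed, and your closing remark on why no exclusion of exceptional times is required is apt.
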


For more regular potentials we expect that the conclusion of Corollary \ref{cor conv minfty} can be improved. Let us first provide a version of the Trotter formula for potentials in $ M^{\infty,1}(\rd)$, with strong convergence on $M^1(\rd)$.

\begin{theorem}
	Assume $V \in M^{\infty,1}(\rd)$. Let $\{ E_n(t)\}$ be the sequence of Feynman-Trotter parametrices defined in \eqref{trotter maint} and $U(t)$ be the Schr\"odinger evolution operator $U(t)$ associated with the Cauchy problem \eqref{cauchy schr}. For any fixed $t \in \bR$ we have
	\[ \lim_{n\rightarrow\infty} E_n(t) = U(t), \quad \lim_{n\rightarrow\infty} E_n(t)^* = U(t)^*  \] in the strong topology of operators acting on $M^1(\rd)$.
\end{theorem}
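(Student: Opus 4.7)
The natural strategy is to reduce the statement to the classical Trotter product formula for bounded perturbations of a $C_0$-group, now applied in the Banach space $M^1(\rd)$ instead of $L^2(\rd)$. Concretely, I would verify the following two ingredients and then invoke the abstract Trotter theorem from \cite{engel}. First, the family $\{e^{-itH_0}\}_{t\in\bR} = \{\mu(\cA_t)\}_{t\in\bR}$ forms a strongly continuous group of bounded operators on $M^1(\rd)$: every metaplectic operator restricts to a bounded automorphism of the Feichtinger algebra (in fact, $M^1$ is invariant under time--frequency shifts, Fourier transform, and linear changes of variable, which generate $\Mpdr$), and $t\mapsto \cA_t$ is a continuous one-parameter subgroup of $\Spdr$, so the resulting group is $C_0$ on $M^1$. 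Second, pointwise multiplication by $V\in M^{\infty,1}(\rd)$ defines a bounded operator on $M^1(\rd)$: this is the standard convolution-type estimate $M^{\infty,1}\cdot M^1 \hookrightarrow M^1$ (equivalently, the dual estimate $M^{\infty,1}\cdot M^\infty \hookrightarrow M^\infty$ transported via the duality $(M^1)'=M^\infty$), which follows from the Banach algebra/module structure worked out in \cite{gro1 book,rs mod}. In particular the bounded operator $M_V:f\mapsto Vf$ on $M^1(\rd)$ is a bounded perturbation of the generator $-iH_0$ of the group $\mu(\cA_t)$.

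Once these two facts are in place, $-i(H_0+V)$ generates a $C_0$-group $U(t)$ on $M^1(\rd)$ (its restriction from $L^2$, which on $L^2$ coincides with the Schr\"odinger propagator already considered), and the classical Trotter product formula for bounded perturbations (\cite[Cor.\ III.5.8]{engel}, or the direct argument using the Dyson series together with the uniform bound $\|E_n(t)\|_{M^1\to M^1}\le Ce^{C|t|\|V\|_{M^{\infty,1}}}$) yields
\[
E_n(t)=\bigl(e^{-i(t/n)H_0}e^{-i(t/n)V}\bigr)^{n}\longrightarrow U(t)
\]
strongly in $\cB(M^1(\rd))$, uniformly on compact $t$-intervals. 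To get the uniform bound one uses that $e^{-i\tau V}$ lies in $M^{\infty,1}(\rd)$ with norm $\le e^{|\tau|\,\|V\|_{M^{\infty,1}}}$ (since $M^{\infty,1}$ is a Banach algebra under pointwise product and contains the constants), and applies the multiplier estimate from step two to each factor, while the metaplectic factors contribute only a fixed operator norm per step.

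For the adjoint statement, I would observe that the $L^2$-adjoint satisfies
\[
E_n(t)^{*}=\bigl(e^{i(t/n)\overline{V}}e^{i(t/n)H_0}\bigr)^{n},\qquad U(t)^{*}=e^{it(H_0+\overline{V})},
\]
because $H_0$ is self-adjoint and $(M_V)^{*}=M_{\overline{V}}$. These are again Trotter parametrices for the $C_0$-group on $M^1(\rd)$ generated by $iH_0+iM_{\overline{V}}$, only with the order of the two factors in each step reversed. The very same abstract Trotter theorem covers this reverse-ordered scheme, so $E_n(t)^{*}\to U(t)^{*}$ strongly in $\cB(M^1(\rd))$ by an identical argument with $V$ replaced by $\overline{V}$ and $t$ by $-t$ (or, equivalently, by symmetry of the Trotter product formula). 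The main technical obstacle is really step two above---establishing that multiplication by an arbitrary $V\in M^{\infty,1}$ is bounded on $M^1$ with norm controlled by $\|V\|_{M^{\infty,1}}$---together with the associated exponential estimate in the Sj\"ostrand class; everything else is an application of standard semigroup perturbation theory once the function-space framework is in place.
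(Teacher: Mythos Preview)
Your proposal is correct and follows essentially the same route as the paper: establish that $U_0(t)=e^{-itH_0}$ restricts to a $C_0$-group on $M^1(\rd)$, observe that multiplication by $V\in M^{\infty,1}(\rd)$ is bounded on $M^1(\rd)$, and then invoke the classical Trotter formula for bounded perturbations from \cite{engel}; the adjoint statement is handled by the same argument with the factors reversed. The only cosmetic differences are that the paper obtains the $C_0$-property on $M^1$ by citing the well-posedness result in \cite{CNR rough} together with the subspace-semigroup machinery of \cite{engel}, whereas you argue directly via the metaplectic representation, and your treatment of the adjoint (writing $U(t)^*=e^{it(H_0+\overline{V})}$) is actually more accurate than the paper's shorthand $U(t)^*=U(-t)$ in the case of complex-valued $V$.
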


\begin{proof}
	We prove that $E_n(t)\to U(t)$ strongly in $\cB(M^1(\rd))$; the claim concerning adjoint operators follows by similar arguments since $U(t)^* =U(-t)$ and $E_n(t)^*=\lc e^{i\frac{t}{n}V}e^{i\frac{t}{n}H_0} \rc^n$. \par
As already observed, we know that the operator $-iH_0$ with domain $D(H_0)= \{ \varphi \in L^2(\rd) \,:\, H_0\varphi \in L^2(\rd) \}$ is self-adjoint (\cite{hormander1 mehler}). Let $U_0(t)=e^{-itH_0}$ be the corresponding strongly   continuous unitary group on $L^2(\rd)$. The well-posedness of the Schr\"odinger equation $i\partial_t \psi = H_0 \psi$ in $M^1(\rd)$ (see e.g. \cite{CNR rough}) implies that the restriction of $U_0(t)$ to $M^1(\rd)$ defines a strongly continuous group on $M^1(\rd)$, its generator being the restriction of $H_0$ to the subspace $\{ \varphi \in M^1(\rd) \,:\, H_0\varphi \in M^1(\rd) \}$, as a consequence of known results on subspace semigroups, cf. \cite[Chapter 2, Sec. 2.3]{engel}. Since the pointwise multiplication by $V\in M^{\infty,1}(\rd)$ defines a bounded operator on $M^1(\rd)$, the desired result follows from the classical Trotter formula (\cite[Cor. 2.7 and Ex. 2.9]{engel}).
\end{proof}	

We provide an equivalent formulation of the previous result for the corresponding integral kernels, which is indeed a partial counterpart of the pointwise convergence results of Section \ref{sec pw conv}.

\begin{theorem}
	Under the same assumptions of Theorem \ref{maint sjo}, for all $t \in \bR$ and $\varphi \in M^1(\rd)$, the functions
	\[ \la e_{n,t}(x,\cdot),\varphi\ra, \quad \la e_{n,t}(\cdot,y),\varphi\ra, \quad\la u_t(x,\cdot),\varphi\ra, \quad\la u_t(\cdot,y),\varphi\ra \]
	belong to $M^1(\rd)$.\par
	 Moreover
	\[ \la e_{n,t}(x,\cdot),\varphi\ra \to \la u_t(x,\cdot),\varphi\ra, \quad \la e_{n,t}(\cdot,y),\varphi\ra \to \la u_t(\cdot,y),\varphi\ra \]
in $M^1(\rd)$, hence in $L^p(\rd)$ for every $1\leq p\leq \infty$.
\end{theorem}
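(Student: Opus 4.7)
The plan is to recognize each of the four bracket expressions as the action of an operator in the list $\{E_n(t), E_n(t)^*, U(t), U(t)^*\}$ applied to $\varphi$ (possibly with a complex conjugation), and then to invoke the strong $M^1$-convergence established in the preceding theorem.

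\medskip
\emph{Step 1: Identifying the pairings.} Since $E_n(t), U(t) \in \cB(L^2(\rd)) \subset \cB(M^1(\rd), M^\infty(\rd))$, the Feichtinger kernel theorem places $e_{n,t}, u_t$ in $M^\infty(\rdd)$, so the one-variable pairings against $\varphi \in M^1(\rd)$ make sense as elements of $\cS'(\rd)$. For $\varphi \in \cS(\rd)$, a Fubini argument applied to the Schwartz-kernel representation $E_n(t)\psi(x) = \int e_{n,t}(x,y)\psi(y)\,dy$, together with the sesquilinear convention for $\la\cdot,\cdot\ra$ recalled in Section \ref{sec prel}, yields
\[ \la e_{n,t}(x,\cdot),\varphi\ra = E_n(t)\bar{\varphi}(x), \qquad \la e_{n,t}(\cdot,y),\varphi\ra = \overline{E_n(t)^*\varphi}(y), \]
and the analogous identities for $u_t$. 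The extension from $\cS(\rd)$ to arbitrary $\varphi \in M^1(\rd)$ follows by density of $\cS(\rd) \subset M^1(\rd)$ combined with the $M^1$-to-$M^1$ continuity of $E_n(t)$ and $E_n(t)^*$ granted by the preceding theorem.

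\medskip
\emph{Step 2: Convergence in $M^1$ and $L^p$.} By the preceding theorem, $E_n(t)\bar{\varphi} \to U(t)\bar{\varphi}$ and $E_n(t)^*\varphi \to U(t)^*\varphi$ in $M^1(\rd)$. Because $M^1(\rd)$ is invariant under complex conjugation---an equivalent norm arises from the pointwise identity $|V_g \bar{f}(x,\xi)| = |V_{\bar{g}}f(x,-\xi)|$---all four bracket functions belong to $M^1(\rd)$, and the claimed $M^1$-convergence follows by applying the identifications of Step 1 to both the approximating and the limiting kernel. The $L^p(\rd)$-convergence for every $1 \le p \le \infty$ is then immediate from the continuous embedding $M^1(\rd) \hookrightarrow L^p(\rd)$ recorded in Section \ref{sec prel}.

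\medskip
\emph{Main obstacle.} The only substantive point is Step 1: since $e_{n,t}$ and $u_t$ are genuine distributions rather than locally integrable functions, the Fubini manipulation that swaps pairing and integration cannot be performed directly on the kernels, and must instead be bootstrapped from the Schwartz case by a density argument exploiting the $M^1$-to-$M^1$ boundedness of the operators. Once this identification is done, the theorem is essentially a reformulation of the preceding strong-operator convergence statement at the level of integral kernels.
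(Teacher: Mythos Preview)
Your proposal is correct and follows exactly the approach the paper intends: the paper presents this theorem explicitly as ``an equivalent formulation of the previous result for the corresponding integral kernels'' and offers no proof beyond the one-line remark that the $L^p$ conclusion follows from the embedding $M^1(\rd)\hookrightarrow L^p(\rd)$. Your Step~1 identifications $\la e_{n,t}(x,\cdot),\varphi\ra = E_n(t)\bar{\varphi}(x)$ and $\la e_{n,t}(\cdot,y),\varphi\ra = \overline{E_n(t)^*\varphi}(y)$ are precisely what makes this reformulation work, and your Step~2 is the single embedding remark the paper supplies.
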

The last conclusion follows from the continuous embedding $M^1(\rd)\hookrightarrow L^p(\rd)$, for every $1\leq p\leq \infty$.
\begin{remark} We expect other improvements of Theorem \ref{thm conv minfty} to hold in the case where $A_n = E_n(t)$, $A=U(t)$. In particular, convergence result for the corresponding integral kernels could be investigated in the context of mixed modulation spaces and generalized kernel theorems in the spirit of \cite{cn ker}. We will not engage in such formulation here in order to avoid quite technical discussions.
\end{remark}

\section{Physics at exceptional times}
In spite of the attempts to shed light on the nature of exceptional times and the partial results in the previous section, a physical interpretation of exceptional times is still not clear at the moment. This non-trivial question also appears in the form of an enigmatic exercise in the textbook \cite[Problem 3-1]{feyn2 hibbs} by Feynman and Hibbs. While dimensional analysis and heuristic arguments may provide some hints, a precise answer still seems to be missing.

We give our contribution to this discussion with a short argument which elucidates the nature of exceptional times in terms of measurable quantities. Recall that $B(u,r)$ denotes the ball with center $u \in \rd$ and radius $r>0$ in $\rd$. Following the custom in physics we adopt below the bra-ket notation, and we identify states with their wave functions in the position representation.

Fix $x_0,y_0 \in \rd$ and $a,b>0$, and consider the normalised wave-packets
\[ |A\ra = \frac{1}{\sqrt{|B(y_0,a)|}} \mathbbm{1}_{B(y_0,a)}
, \quad |B\ra = \frac{1}{\sqrt{|B(x_0,b)|}} \mathbbm{1}_{B(x_0,b)}.  \]
The corresponding transition amplitude from the state $|A\ra$ to $|B\ra$ under the Hamiltonian $H=H_0 + V$ as in Theorem \ref{maint sjo}, namely
\[ I=I(t,x_0,y_0,a,b) = \la B | U(t) | A\ra, \quad t \in \bR, \]
trivially satisfies the estimate
\[ |I(t,x_0,y_0,a,b)| \le 1, \quad \forall t \in \bR,\, x_0,y_0 \in \rd, \, a,b>0. \]
This bound cannot be improved at exceptional times: consider for instance the case where $t=0$, $x_0 = y_0$ and $a=b$, which yields $I=1$. Nevertheless, we have the following result.

\begin{proposition}
	Under the same assumptions of Theorem \ref{maint sjo}, for all $t \in \rte$ and $x_0,y_0 \in \rd$ we have
	\[ \lim_{a,b\to0} \frac{I(t,x_0,y_0,a,b)}{(ab)^{d/2}} = C\overline{u_t(x_0,y_0)}, \] where $C=C(d)=|B(0,1)|$.
\end{proposition}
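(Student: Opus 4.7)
The plan is to substitute the explicit wave packets into $I = \la B|U(t)|A\ra$ and reduce the problem to a classical averaging argument for a continuous function. Writing $U(t)$ as an integral operator with kernel $u_t$ and using that $\psi_A,\psi_B$ are real-valued normalized indicators of balls, the bracket becomes (up to complex conjugation determined by the convention $\la f,g\ra = \int f\overline{g}\,dx$ adopted in Section \ref{sec prel}) the normalized double integral
\[ I(t,x_0,y_0,a,b) = \frac{1}{\sqrt{|B(x_0,b)||B(y_0,a)|}}\int_{B(x_0,b)}\int_{B(y_0,a)}\overline{u_t(x,y)}\,dy\,dx. \]
Since $|B(y_0,a)| = Ca^d$ and $|B(x_0,b)| = Cb^d$ with $C = |B(0,1)|$, the denominator factorises as $C(ab)^{d/2}$, so
\[ \frac{I(t,x_0,y_0,a,b)}{(ab)^{d/2}} = \frac{C}{|B(x_0,b)||B(y_0,a)|}\int_{B(x_0,b)}\int_{B(y_0,a)}\overline{u_t(x,y)}\,dy\,dx. \]

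The only non-trivial task is then to pass to the limit $a,b\to 0$, which becomes immediate once continuity of $u_t$ at $(x_0,y_0)$ is established. This is exactly where Theorem \ref{maint sjo} is brought in: for $t\in\rte$ it ensures that $e^{-2\pi i\Phi_t}u_t\in M^{\infty,1}(\rdd)$, and by Proposition \ref{prop pot sp}(1) we have the embedding $M^{\infty,1}(\rdd)\subset C^0(\rdd)\cap L^\infty(\rdd)$. Because the phase $\Phi_t$ in \eqref{phit} is a polynomial in $(x,y)$ for each fixed $t\in\rte$, multiplication by the smooth, unimodular factor $e^{2\pi i\Phi_t}$ preserves continuity, and we conclude $u_t\in C^0(\rdd)\cap L^\infty(\rdd)$.

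With $u_t$ continuous at $(x_0,y_0)$, the elementary mean value property for continuous functions on shrinking product balls yields
\[ \lim_{a,b\to 0}\frac{1}{|B(x_0,b)||B(y_0,a)|}\int_{B(x_0,b)}\int_{B(y_0,a)}\overline{u_t(x,y)}\,dy\,dx = \overline{u_t(x_0,y_0)}, \]
which combined with the previous display gives the desired limit $C\,\overline{u_t(x_0,y_0)}$. I expect no serious obstacle beyond correctly invoking the right regularity input: the whole argument is a normalised mean-value calculation, and the only non-trivial content, paid for by Theorem \ref{maint sjo} together with the embedding $M^{\infty,1}\hookrightarrow C^0\cap L^\infty$ from Proposition \ref{prop pot sp}, is precisely the continuity of $u_t$ away from exceptional times.
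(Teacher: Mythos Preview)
Your proof is correct and follows essentially the same route as the paper's own argument: write $I$ explicitly as the normalized double integral of $\overline{u_t}$ over $B(x_0,b)\times B(y_0,a)$, then use the continuity of $u_t$ (guaranteed by Theorem~\ref{maint sjo}) to pass to the limit via the elementary mean-value property. The only cosmetic difference is that you extract continuity from $e^{-2\pi i\Phi_t}u_t\in M^{\infty,1}\subset C^0$ via Proposition~\ref{prop pot sp}(1), whereas the paper phrases it as $u_t\in(\mathcal{F}L^1)_{\rm loc}$; these are equivalent observations.
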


\begin{proof} An explicit computation yields
	\[ \frac{I(t,x_0,y_0,a,b)}{C(ab)^{d/2}} = \frac{1}{C^2(ab)^{d}} \int_{B(x_0,b)} \int_{B(y_0,a)} \overline{u_t(x,y)}dydx, \]
	and the conclusion follows by the continuity of $u_t(x,y)$ in $\rdd$, because $u_t \in \big(\mathcal{F} L^1\big)_{\rm loc}(\rdd)$ for $t \in \rte$ by Theorem \ref{maint sjo}.
\end{proof}

This result shows that while $|I| \le 1$ in general, for a non-exceptional time $t \in \rte$ we have that $|I| \sim (ab)^{d/2}$  as $a,b\to 0$. In particular $|I| \to 0$ as $a,b\to 0$ except (possibly) for exceptional times.

\section*{Acknowledgements}
We would like to express our gratitude to Professors Elena Cordero, Ernesto De Vito and Stephan Waldmann for fruitful conversations on the topics of this paper.

\end{document}